\newtheorem{remark}[theorem]{Remark}
\newtheorem{algorithm}[theorem]{Algorithm}
\newcommand{\PP}{\mathbb P}
\newcommand{\E}{\mathbb E}
\title{Mathematical Analysis of Temperature Accelerated Dynamics}
\author{David Aristoff\thanks{Department of Mathematics, University of Minnesota}\and 
Tony Leli\`evre\thanks{CERMICS, \'Ecole des Ponts ParisTech}}
\begin{document}

\date{May 2013}

\maketitle

\begin{abstract}
We give a mathematical framework for temperature accelerated dynamics
(TAD), an algorithm proposed by M.R. S\o rensen and A.F. Voter
in~\cite{Voter} to efficiently generate metastable stochastic dynamics. 
Using the notion of {\em quasistationary distributions}, we 
propose some modifications to TAD. Then considering the modified 
algorithm in an idealized setting, we show how TAD can be 
made mathematically rigorous.\end{abstract}

\begin{keywords} 
accelerated molecular dynamics, temperature accelerated dynamics, 
Langevin dynamics, stochastic dynamics, metastability, 
quasi-stationary distributions, kinetic Monte Carlo
\end{keywords}

\begin{AMS}
82C21, 82C80
\end{AMS}

\pagestyle{myheadings}
\thispagestyle{plain}
\markboth{D. ARISTOFF AND T. LELI\`EVRE}{MATHEMATICAL ANALYSIS OF TAD}

\section{Introduction}

Consider the stochastic dynamics $X_t$ on ${\mathbb R}^d$ satisfying 
\begin{equation}\label{1}
dX_t = -\nabla V(X_t)\,dt + \sqrt{2\beta^{-1}} \,dW_t, 
\end{equation}
called {\em Brownian dynamics} or {\em overdamped Langevin dynamics}. 
Here $V:{\mathbb R}^d \to \mathbb R$ is a smooth function, $\beta = (k_B T)^{-1}$ 
is a positive constant, and $W_t$ is a standard $d$-dimensional Brownian motion \cite{Oksendal}. 
The dynamics~\eqref{1} is used to model the evolution of the
position vector $X_t$ of $N$ particles (in which case $d=3N$) in an  
energy landscape defined by the potential energy~$V$. This is the 
so-called {\em molecular dynamics}. Typically this energy landscape has 
many metastable states, and in applications it is of interest 
to understand how $X_t$ moves between them. Temperature accelerated 
dynamics (TAD) is an algorithm for computing this {\it metastable 
dynamics} efficiently. (See \cite{Voter} for the original algorithm, 
\cite{Voter3} for some modifications, and 
\cite{Voter2} for an overview of TAD and other 
similar methods for accelerating dynamics.)

Each metastable state corresponds to a basin of attraction $D$  
for the gradient dynamics $dx/dt=-\nabla V(x)$ of a local minimum 
of the potential $V$. In TAD, temperature is raised to force $X_t$ to leave each  
basin more quickly. What would have happened at the original 
low temperature is then extrapolated. To generate metastable 
dynamics of $(X_t)_{t\ge 0}$ at low temperature, this procedure is repeated in each basin. 
This requires the assumptions: 
\begin{itemize}
 \item[(H1)]{$X_t$ immediately reaches local equilibrium upon entering a given basin $D$; and}
\item[(H2)]{An Arrhenius law may be used to extrapolate the exit event at low temperature.}
\end{itemize}
The  Arrhenius (or Eyring-Kramers) law states that, in the small 
temperature regime, the time it takes to transition between neighboring basins $D$ and $D'$ is  
\begin{equation}\label{Arrheniuslaw}
{\nu^{-1}} \exp\left[\frac{|\delta V|}{k_B T}\right],
\end{equation}
where $\delta V$ is the difference in potential energy between the local 
minimum in $D$ and the lowest saddle point along a path joining $D$ to $D'$. Here $\nu$ is a constant (called a {\em prefactor}) depending on the eigenvalues of 
the Hessian of $V$ at the local minimum and at the saddle point, but not on 
the temperature. In practice the Arrhenius law is used when $k_B T \ll |\delta V|$. 
We refer to \cite{Berglund, Bovier, Hanggi, Menz} for details.

TAD is a very popular technique, in particular for
applications in material sciences; {see for example} 
%I added references...
\cite{TAD9, TAD11, TAD6, TAD5, TAD3, TAD10, TAD4, TAD8, TAD1, TAD2, TAD7}.
In this article we provide a mathematical framework for 
TAD, and in particular a mathematical formalism for 
(H1)-(H2). Our analysis will actually concern a slightly modified 
version of TAD. In this modified version, which we call {\em modified TAD}, 
the dynamics is allowed to reach local equilibrium
 after entering a basin, 
thus circumventing assumption~(H1).  
{The 
assumption (H1) is closely related to the 
no recrossings assumption in transition state 
theory; in particular one can see the local 
equilibration steps (modifications (M1) and (M2) below) in modified TAD as a way to account 
for recrossings.}
We note that modified TAD can be used in practice and, since it does not 
require the assumption (H1), may reduce some of the numerical error in 
(the original) TAD. 

To analyze modified TAD, we first 
make the notion of local equilibration precise by using 
{\it quasistationary distributions}, in the spirit of \cite{Tony}, and 
then we circumvent (H2) by introducing an idealized extrapolation procedure 
which is {\it exact}. The result, which we call {\em idealized TAD}, yields exact 
metastable dynamics; see Theorem~\ref{mainthm} below. 
Idealized TAD is not a practical algorithm because it depends on 
quantities related to quasistationary distributions which cannot be efficiently computed. 
However, we show that idealized TAD agrees with modified TAD at low temperature. 
In particular we justify (H2) in modified TAD by showing that at low temperature, 
the extrapolation procedure of idealized TAD agrees with that of
modified TAD (and of TAD), which is based on the Arrhenius law~\eqref{Arrheniuslaw}; 
see Theorem~\ref{theorem2} below.

In this article, we focus on the overdamped Langevin
dynamics~\eqref{1} for simplicity. The algorithm 
is more commonly used in practice
with the Langevin dynamics 
\begin{align}\begin{split}\label{2ndorder}
\begin{cases} dq_t = M^{-1} p_t \,dt\\
dp_t = -\nabla V(q_t)\,dt -\gamma M^{-1}p_t\,dt +\sqrt{2\gamma \beta^{-1}}\,dW_t\end{cases}.
\end{split}
\end{align}
{The notion of quasistationary distributions 
still makes sense for the Langevin dynamics~{\cite{Nier}}, so an extension of our analysis to 
that dynamics 
is in principle possible, though the mathematics 
there are much more difficult due to the degeneracy of the infinitesimal 
generator of {\eqref{2ndorder}}. In particular, some results on the low temperature
asymptotics of the principal eigenvalue and eigenvector for hypoelliptic
diffusions are still missing.}

%The overdamped case is simpler to study from a mathematical
%viewpoint, as the ellipticity of the underlying 
%infinitesimal generator simplifies the definition 
%and properties of the quasistationary distribution.

The paper is organized as follows. In Section~\ref{sec:TAD}, we recall
TAD and present modified TAD. In Section~\ref{sec:idealTAD}, we introduce idealized TAD and 
prove it is exact in terms of metastable dynamics. Finally, in Section~\ref{sec:theta},
we show that idealized TAD and modified TAD are essentially equivalent in the low temperature regime. 
Our analysis in Section~\ref{sec:theta} is restricted to a one-dimensional setting. 
The extension of this to higher dimensions will be the purpose of another work.

Throughout the paper it will be convenient to refer to various objects related to the dynamics~\eqref{1} 
at a high and low temperature, $\beta^{hi}$ and $\beta^{lo}$, as well as at a generic temperature, 
$\beta$. To do so, we use superscripts $^{hi}$ and $^{lo}$ to indicate that we 
are looking at the relevant object at $\beta = \beta^{hi}$ or $\beta = \beta^{lo}$, 
respectively. We drop the superscripts to consider objects at a generic temperature~$\beta$. 

\section{TAD and modified TAD}\label{sec:TAD}

Let $X_t^{lo}$ be a stochastic dynamics obeying~\eqref{1} at 
a low temperature $\beta = \beta^{lo}$, 
and let $S:{\mathbb R}^d \to \mathbb N$ be a function which 
labels the basins of $V$. (So each basin $D$ has the form 
$S^{-1}(i)$ where $i \in \mathbb N$.) The goal of TAD is to efficiently 
estimate the metastable dynamics at low temperature; in other words:
\begin{itemize}
\item{
Efficiently generate a trajectory ${\hat S}(t)_{t\ge 0}$ which has approximately 
the same distribution as  
$S(X_t^{lo})_{t\ge 0}$.}
\end{itemize}
The aim then is to get approximations of {\em trajectories}, {including
distributions of hitting times, time correlations, etc... and
 thus not only
the evolution of the averages of some observables or  
averages of observables with respect to the invariant distribution.} 

At the heart of TAD is the problem of efficiently simulating an  
exit of $X_t^{lo}$ from a generic basin $D$, 
since the metastable dynamics are generated by essentially  
repeating this. To efficiently simulate an exit of $X_t^{lo}$ from $D$, 
{temperature is raised so that} $\beta^{hi} < \beta^{lo}$ 
and a corresponding high temperature dynamics $X_t^{hi}$ is evolved. 
The process $X_t^{hi}$ is allowed to search for various exit 
paths out of $D$ until a stopping time $T_{stop}$; 
each time $X_t^{hi}$ reaches $\partial D$ it is reflected back 
into $D$, the place and time of the attempted exit is recorded, 
and the Arrhenius law~\eqref{Arrheniuslaw} is used to extrapolate a low temperature exit. 
After time $T_{stop}$ the fastest extrapolated low temperature 
exit is selected. This exit is considered an approximation of the 
first exit of $X_t^{lo}$ from $D$. The original algorithm is 
described in Section~\ref{originalTAD} below; a modified 
version is proposed in Section~\ref{modifiedTAD} below.

\subsection{TAD}\label{originalTAD}

In the following, we let $D$ denote a generic basin. We 
let $x_0$ be the minimum of $V$ inside $D$, and we 
assume there are finitely many saddle points, $x_i$ ($i\ge 1$),  
of $V$ on $\partial D$. The original TAD algorithm \cite{Voter} for 
generating the approximate metastable dynamics 
${\hat S}(t)$ is as follows: 
\begin{algorithm}[TAD]\label{alg1} 
Let $X_0^{hi}$ be in the basin $D$, and 
start a low temperature simulation clock 
$T_{tad}$ at zero: $T_{tad} = 0$. Then 
iterate on the visited basins the following:
\begin{enumerate}
\item{Let $T_{sim} = 0$ and $T_{stop} = \infty$. These 
are the simulation and stopping times for the high 
temperature exit search.}
 \item {Evolve $X^{hi}_t$ at $\beta = \beta^{hi}$ starting 
at $t=T_{sim}$  until the 
first time after $T_{sim}$ at which it exits $D$. (Exits 
are detected by checking if the dynamics lands into another 
basin via gradient descent, i.e. the deterministic dynamics $dx/dt = -\nabla V(x)$.)
Call this time $T_{sim}+\tau$.}
\item {Associate a nearby saddle point, $x_i$, of $V$ on $\partial D$ 
to the place where $X^{hi}_t$ exited~$D$. (This 
can be done by using, for example, the nudged elastic 
band method \cite{Henkelman}; see below.)}
\item {Advance the high temperature simulation clock by $\tau$: $T_{sim} = T_{sim} + \tau$.}
\item {If an exit at $x_i$ has already been observed, go to Step 8. If an exit at $x_i$  has not yet been observed, 
set $T_i^{hi} = T_{sim}$ and extrapolate the high temperature exit time to low
temperature using the formula:
\begin{equation}\label{arrhenius}
T_i^{lo} = T_{i}^{hi}\,e^{-(\beta^{hi}-\beta^{lo})(V(x_i)-V(x_0))}. 
\end{equation}
This equation comes from the Arrhenius law~\eqref{Arrheniuslaw} for 
exit rates in the low temperature regime; see the remarks below. }
\item {Update the smallest extrapolated exit time:
\begin{equation*}
T_{min}^{lo} = \min\{T_{min}^{lo}, T_i^{lo}\},
\end{equation*}
and the (index of) the corresponding exit point:
\begin{equation*}
 I_{min}^{lo} = i \hskip10pt\hbox{if}\hskip10pt T_{min}^{lo} = T_i^{lo}.
\end{equation*}
}
\item {Update $T_{stop}$. The stopping time is chosen so that with 
confidence $1-\delta$, an extrapolated low temperature exit time 
smaller than $T_{min}^{lo}$ will not be observed. See equation~\eqref{TADstop} below 
for how this is done.}
\item {If $T_{sim} \le T_{stop}$, reflect $X_t^{hi}$ back into $D$ and
go back to Step 2. Otherwise, proceed to Step~9.}
\item {Set 
\begin{equation*}
 {\hat S}(t) = S(D) \hskip10pt {for} \hskip10pt t \in [T_{tad},T_{tad}+T_{min}^{lo}],
\end{equation*}
and advance the low temperature simulation clock by 
$T_{min}^{lo}$: 
\begin{equation*}
T_{tad} = T_{tad}+T_{min}^{lo}.
\end{equation*}}
\item{Send $X_t^{hi}$ to the new basin, namely the
neighboring basin of $D$ which is attained through the saddle point
$x_{I_{min}^{lo}}$. Then, go back to Step 1, the domain $D$ now being 
the neighboring basin.}
\end{enumerate}
\end{algorithm}

The nudged elastic band method \cite{Henkelman} consists, starting from a 
trajectory leaving $D$, of computing by a gradient descent method the closest 
minimum energy path leaving $D$, with the end points of the trajectory being fixed. 
This minimum energy path necessarily leaves $D$ through a saddle point.

\begin{remark}
{ When the overdamped Langevin dynamics leaves
a basin near a saddle point, its first re-entrance 
into that basin is immediate. 
Thus, Algorithm~{\ref{alg1}} does not really make 
sense for overdamped Langevin dynamics. (With the Langevin dynamics~{\eqref{2ndorder}}, however, this difficulty does not arise.) 
In modified TAD, defined below, we will allow the 
dynamics to evolve away from the boundary of a basin 
after an exit event, thus circumventing this problem.}
\end{remark}

Below we comment on the equation~\eqref{arrhenius} from 
which low temperature exit times are extrapolated,  
as well as the stopping time $T_{stop}$. 

\begin{itemize}
\item {\bf Low temperature extrapolation}. 

The original 
TAD uses the following kinetic Monte Carlo (KMC) framework~\cite{KMC}. 
For a given basin $D$, it is assumed that the time ${\tilde T}_i$ 
to exit through the saddle point $x_i$ of $V$ on $\partial D$ is exponentially 
distributed with rate $\kappa_i$ given by the 
Arrhenius law~\eqref{Arrheniuslaw}: 
\begin{equation}\label{expparam}
 \kappa_i \equiv {\nu_i} e^{-\beta (V(x_i)-V(x_0))}
\end{equation}
where we recall $\nu_i$ is a temperature independent 
prefactor and $x_0$ is the minimum of $V$ in $D$. 
An exit event from $D$ at temperature $\beta$ is obtained 
by sampling independently the times ${\tilde T}_i$ for 
all the saddle points $x_i$ on $\partial D$, 
then selecting the smallest time and the 
corresponding saddle point.

In TAD, this KMC framework is used for both 
temperatures $\beta^{lo}$ and $\beta^{hi}$. 
That is, it is assumed that the high and low temperature 
exit times ${\tilde T}_i^{hi}$ and ${\tilde T}_i^{lo}$ 
through each saddle point $x_i$ satisfy:
\begin{align}\begin{split}\label{explaw}
\PP({\tilde T}_i^{hi} > t) &= e^{-{\kappa}_i^{hi} t}\\
\PP({\tilde T}_i^{lo} > t) &= e^{-{\kappa}_i^{lo} t}
\end{split}
\end{align}
where
\begin{align}\begin{split}\label{prefactors}
\kappa_i^{hi} &= {\nu_i} e^{-\beta^{hi}(V(x_i)-V(x_0))}\\
\kappa_i^{lo} &= {\nu_i} e^{-\beta^{lo}(V(x_i)-V(x_0))}
\end{split}
\end{align}
Observe that then  
\begin{equation*}
{\tilde T}_i^{hi}\, {\frac{{\kappa}_i^{hi}}{{\kappa}_i^{lo}}} = {\tilde T}_i^{hi} \,e^{-(\beta^{hi}-\beta^{lo})(V(x_i)-V(x_0))}
\end{equation*}
has the same probability law as ${\tilde T}_i^{lo}$. This leads to the 
extrapolation formula~\eqref{arrhenius}.

The assumption of exponentially distributed exit times $T_i^{hi}$ 
and $T_i^{lo}$ is 
valid only if the dynamics at both temperatures immediately reach local equilibrium upon 
entering a basin; see (H1) and Theorem~\ref{theorem0a} below. In modified TAD, described below, 
we circumvent this immediate equilibration assumption by allowing the dynamics 
at both temperatures to 
reach local equilibrium. In particular, in modified TAD the low temperature assumption 
is no longer needed to get exponential exit distributions as in~\eqref{explaw}. 
On the other hand, to get the {\em rate constants} in~\eqref{prefactors} 
-- and by extension the extrapolation rule~\eqref{arrhenius}; see (H2) -- a low 
temperature assumption is required. {We will justify both~{\eqref{explaw}} and~{\eqref{prefactors}} 
in the context of modified TAD. More precisely 
we show that~{\eqref{explaw}} will be valid at any temperature, 
while a low temperature assumption is needed to justify~{\eqref{prefactors}}. 
Note that, inspecting equation~{\eqref{prefactors}}, the 
low temperature assumption will be required for {\em both} temperatures 
used in TAD -- so $1/\beta^{hi}$
will be small in an absolute sense, but 
large compared to $1/\beta^{lo}$.}

\item{\bf Stopping time.}

The stopping time $T_{stop}$ is chosen so that if the high temperature 
exit search is stopped at time $T_{stop}$, then with probability 
$1-\delta$, the smallest extrapolated low temperature exit time will be 
correct. Here $\delta$ is a user-specified parameter.

To obtain a formula for the 
stopping time $T_{stop}$ it is assumed that, in addition to (H1)-(H2):
\begin{itemize}
 \item[(H3)] {There 
is a minimum, $\nu_{min}$, to all the prefactors in
equation~{\eqref{prefactors}}}: $$\forall i \in \{1,\ldots k\}, \nu_i
\ge \nu_{min},$$ 
\end{itemize}
where $k$ denotes the number of saddle points on $\partial D$.

Let us now explain how this assumption is used to determine $T_{stop}$.
Let $T$ be a deterministic time. If a high temperature first exit time through $x_i$, $T_i^{hi} > T$, 
extrapolates to a low temperature time less than $T_{min}^{lo}$, then from~\eqref{arrhenius}, 
\begin{equation*}
 V(x_i)-V(x_0) \le \frac{\log(T_{min}^{lo}/T)}{\beta^{lo}-\beta^{hi}}
\end{equation*}
and so
\begin{equation}\label{ki}
{\kappa}_i^{hi} = \nu_i e^{-\beta^{hi}(V(x_i)-V(x_0))} \ge \nu_{min}\exp\left(\frac{\beta^{hi}\log(T_{min}^{lo}/T)}{\beta^{hi}-\beta^{lo}}\right).
\end{equation}
In TAD it is required that this event has a low probability 
$\delta$ of occurring, that is, 
\begin{equation}\label{delta}
\PP(T_i^{hi} > T) = e^{-{\kappa}_i^{hi} T} < \delta.
\end{equation}
Using~\eqref{ki} in~\eqref{delta}, one sees that it suffices that 
\begin{equation*}
\exp\left[-\nu_{min}\exp\left(\frac{\beta^{hi}\log(T_{min}^{lo}/T)}{\beta^{hi}-\beta^{lo}}\right)T\right] < \delta.
\end{equation*}
Solving this inequality for $T$, one obtains 
\begin{equation}
 T > \frac{\log(1/\delta)}{\nu_{min}}\left(\frac{\nu_{min}T_{min}^{lo}}{\log(1/\delta)}\right)^{\beta^{hi}/\beta^{lo}}.
\end{equation}
The stopping time $T_{stop}$ is then chosen to be the right hand side of the above:
\begin{equation}\label{TADstop}
 T_{stop} \equiv \frac{\log(1/\delta)}{\nu_{min}}\left(\frac{\nu_{min}T_{min}^{lo}}{\log(1/\delta)}\right)^{\beta^{hi}/\beta^{lo}}.
\end{equation}
(It is calculated using the current value of $T_{min}^{lo}$.)
The above calculation shows that at simulation time 
$T_{stop}$, with probability at least $1-\delta$, $T_{min}^{lo}$ 
is the same as the smallest extrapolated 
low temperature exit time which would have been observed 
with no stopping criterion. 

For TAD to be practical, the stopping time $T_{stop}$ must be (on average) 
smaller than the exit times at low temperature. The stopping 
time of course depends on the choice of $\nu_{min}$ and $\delta$. 
In practice a reasonable value for $\nu_{min}$ may be known a priori 
\cite{Voter} or obtained by a crude approximation \cite{Voter2}. 
For a given $\delta$, if too large a value of $\nu_{min}$ is used, 
the low temperature extrapolated times 
may be incorrect with probability greater than $\delta$. 
On the other hand, if the value of $\nu_{min}$ is too small,  
then the extrapolated times will be correct with probability $1-\delta$, but 
computational efficiency will be compromised. The usefulness of TAD comes 
from the fact that, in practice, $\nu_{min}$ and $\delta$ can often 
be chosen such that the correct low temperature exit event 
is found by time $T_{stop}$ with large probability $1-\delta$, 
{\it and} $T_{stop}$ is on average much smaller than the exit times 
which would be expected at low temperature. In practical applications, 
TAD has provided simulation time scale boosts of up to $10^9$~\cite{TAD4}.
\end{itemize}

\begin{remark} 
One alternative to TAD is a brute force saddle point 
search method, in which one evolves the 
system at a high temperature $\beta^{hi}$ to 
locate saddle points of $V$ on $\partial D$. 
{(There are other popular techniques 
in the literature to locate saddle points, 
many of which do not use high 
temperature or dynamics; see 
for example~{\cite{Mousseau}}.)}
{Once one is confident that all the physically relevant 
saddle points are found}, the times ${\tilde T}_i^{lo}$ 
to exit through each $x_i$ at low temperature 
can be directly sampled from exponential distributions with 
parameters $\kappa_i$ as in~\eqref{expparam}, using $\beta \equiv \beta^{lo}$. 
(Estimates are available for the $\nu_i$ at low temperature; they 
depend on the values of $V$ and the Hessian matrix of $V$ at $x_i$ and $x_0$. 
See for example \cite{Bovier}.) 

The advantage of TAD over 
a brute force saddle point search method is that 
in TAD, there is a well-defined stopping criterion 
for the saddle point search at temperature $\beta^{hi}$, in 
the sense that the saddle point corresponding to 
the correct exit event at temperature $\beta^{lo}$ 
will be obtained with a user-specified probability. 
In particular, TAD does not require all the saddle points to be found. 
\end{remark}

\subsection{Modified TAD}\label{modifiedTAD}
Below we consider some modifications, (M1)-(M3), to TAD, calling 
the result {\it modified TAD}. The main modifications, (M1)-(M2) below, 
will ensure that the exponential rates assumed 
in TAD are justified. We also introduce a different stopping 
time, (M3). (See the discussion below Algorithm~\ref{alg2}.) 
We note that some of these features are currently being used by 
practitioners of TAD~\cite{Voterpriv}. Here are the three modifications: 

\begin{itemize}
 \item[(M1)] We include a decorrelation step in which an underlying 
low temperature dynamics $(X_t^{lo})_{t\ge 0}$ 
finds local equilibrium in some basin $D$ before we start searching for exit pathways at high 
temperature;

\item[(M2)] Before searching for exit pathways 
out of $D$, we sample local equilibrium at high temperature 
in the current basin $D$, without advancing any clock time;

\item[(M3)] We replace the stopping time~\eqref{TADstop} with 
\begin{equation}\label{modstop}
T_{stop} = T_{min}^{lo}/C,
\end{equation}
where $C$ is a lower bound of the 
minimum of $e^{-(\beta^{hi}-\beta^{lo})(V(x_i)-V(x_0))}$ 
over all the saddle points, $x_i$, of $V$ on $\partial D$. 
\end{itemize}
\begin{remark} 
{In (M3) above we are assuming some a priori knowledge of 
the system, in particular a lower bound of the energy 
barriers $V(x_i)-V(x_0)$, $i \in
\{1,\ldots k\}$. Such a lower bound will not be known in every  situation, 
but in some cases, practitioners can obtain such a bound, see for example~{\cite{Voter3}}. See also the 
discussion in the section ``Stopping time'' below.}
\end{remark}

The modified algorithm is 
as follows; for the reader's convenience we have boxed off the steps of 
modified TAD which are different from TAD.

\begin{algorithm}[Modified TAD]\label{alg2}
Let $X_0^{lo}$ be in the basin $D$, 
set a low temperature simulation clock $T_{tad}$ to zero:
$T_{tad} = 0$, and choose a (basin-dependent) decorrelation time $T_{corr}>0$. 
Then iterate on the visited basins the following:
\vskip5pt

\fcolorbox{black}[HTML]{E9F0E9}{\parbox{\textwidth}{
\begin{enumerate}

\item[]{\bf Decorrelation step:}
 \item{Starting at time $t = T_{tad}$, evolve $X_t^{lo}$ at temperature $\beta = \beta^{lo}$ 
according to~\eqref{1} in the current basin $D$.}
\item{If $X_t^{lo}$ exits $D$ at a time 
$T_{tad} + \tau < T_{tad} + T_{corr}$, 
then set 
\begin{equation*}
{\hat S}(t) = S(D), \hskip10pt t \in [T_{tad},T_{tad}+\tau],
\end{equation*}
advance the low temperature clock by $\tau$:
$T_{tad} = T_{tad} + \tau$,
then go back to Step 1, where $D$ is now the 
new basin. Otherwise, set
\begin{equation*}
{\hat S}(t) = S(D), \hskip10pt t \in [T_{tad},T_{tad}+T_{corr}],
\end{equation*}
advance the low temperature clock by $T_{corr}$: $T_{tad} = T_{tad} + T_{corr}$,  
and initialize the exit step by setting $T_{sim} = 0$ and $T_{stop} = \infty$. 
Then proceed to the exit step.}  
\end{enumerate}}}
\begin{enumerate}[leftmargin=0.79in]
\item[]{\bf Exit step:}
\end{enumerate}

\fcolorbox{black}[HTML]{E9F0E9}{\parbox{\textwidth}{
\begin{enumerate}
\item[1.]{Let $X_{T_{sim}}^{hi}$ be a sample of the 
dynamics~\eqref{1} in local equilibrium in $D$ at 
temperature $\beta = \beta^{hi}$. 
See the remarks below for how this sampling is done. None 
of the clocks are advanced in this step.}
\end{enumerate}}}

\begin{enumerate}[leftmargin=0.79in]
\item[2.] {Evolve $X^{hi}_t$ at $\beta = \beta^{hi}$ starting 
at $t=T_{sim}$  until the 
first time after $T_{sim}$ at which it exits $D$. 
Call this time $T_{sim}+\tau$.}
\item[3.] {Using the nudged elastic band method, 
associate a nearby saddle point, $x_i$, of $V$ on $\partial D$ 
to the place where $X^{hi}_t$ exited $D$.}
\item[4.] {Advance the simulation clock by $\tau$: $T_{sim} = T_{sim} + \tau$.}
\item[5.] {If an exit at $x_i$ has already been observed, go to Step 8. 
If an exit at $x_i$ has not yet been observed, 
set $T_i^{hi} = T_{sim}$ and
\begin{equation}\label{arrhenius2}
T_i^{lo} = T_{i}^{hi}\,e^{-(\beta^{hi}-\beta^{lo})(V(x_i)-V(x_0))}. 
\end{equation} }
\item[6.] {Update the lowest extrapolated exit time:
\begin{equation*}
T_{min}^{lo} = \min\{T_{min}^{lo}, T_i^{lo}\},
\end{equation*}
and the (index of) the corresponding exit point:
\begin{equation*}
 I_{min}^{lo} = i \hskip10pt\hbox{if}\hskip10pt T_{min}^{lo} = T_i^{lo}.
\end{equation*}
}
\end{enumerate}

\fcolorbox{black}[HTML]{E9F0E9}{\parbox{\textwidth}{
\begin{enumerate}
\item[7.]{Update $T_{stop}$: 
\begin{equation}\label{modstop2}
T_{stop} = T_{min}^{lo}/C,
\end{equation}
where $C$ is a lower bound of 
the minimum of $e^{-(\beta^{hi}-\beta^{lo})(V(x_i)-V(x_0))}$ 
over all the saddle points, $x_i$, of $V$ on $\partial D$.}
\item[8.]{If $T_{sim} \le T_{stop}$, 
go back to Step 1 of the exit step; otherwise, proceed to Step~9. }
\end{enumerate}}}

\begin{enumerate}[leftmargin=0.79in]
\item[9.]{Set 
\begin{equation*}
 {\hat S}(t) = S(D) \hskip10pt {for} \hskip10pt t \in [T_{tad},T_{tad}+T_{min}^{lo}],
\end{equation*}
and advance the low temperature simulation clock by 
$T_{min}^{lo}$: 
\begin{equation*}
T_{tad} = T_{tad}+T_{min}^{lo}.
\end{equation*}}
\end{enumerate}

\fcolorbox{black}[HTML]{E9F0E9}{\parbox{\textwidth}{
\begin{enumerate}
\item[10.]{Set {$X_{T_{tad}}^{lo} = X_{T_{I}^{hi}}^{hi}$ 
where $I \equiv I_{min}^{lo}$}. 
Then go back to the decorrelation step, the domain $D$ now being the neighboring basin, 
namely the one obtained by exiting through {$X_{T_{I}^{hi}}^{hi}$}.}
\end{enumerate}}}

\end{algorithm}
\vskip10pt

\begin{itemize}
\item{\bf Local equilibrium in $D$: (M1) and (M2).} 

We introduce the decorrelation step -- see (M1) -- in order to 
ensure that the low temperature dynamics reaches local equilibrium 
in $D$. Indeed, for sufficiently large $T_{corr}$ the low temperature 
dynamics reaches local equilibrium in some
basin.  The convergence to local equilibrium will be made precise in
Section~\ref{sec:idealTAD} using the notion of the {\em quasistationary
  distribution}.  See also~\cite{Gideon,Tony}, in particular for a discussion of the choice of $T_{corr}$. 
Local equilibrium will in general be reached at different times in
different basins, so we allow $T_{corr}$ to be basin dependent.
%In TAD, this decorrelation step is {\em not} used because it
%is assumed -- see (H1) in the introduction -- that the dynamics %immediately
%reaches local equilibrium after entering a generic basin. 
We note that a similar decorrelation step is used in another 
accelerated dynamics proposed by A.F. Voter, the Parallel Replica Dynamics~\cite{VoterParRep}. {The decorrelation step accounts for 
barrier recrossing events: the dynamics is allowed to evolve 
exactly at low temperature after the exit step, capturing any 
possible barrier recrossings, until local equilibrium is 
reached in one of the basins.}

The counterpart of the addition of this decorrelation step is that, 
from (M2), in the exit step we also start the high temperature dynamics from 
local equilibrium in the current basin $D$. 
{A similar step is actually being used by current practitioners 
of TAD~{\cite{Voterpriv}}, though this step is not mentioned 
in the original algorithm~{\cite{Voter}}.} 
To sample local equilibrium in $D$, one can for example take the 
end position of a a sufficiently 
long trajectory of~\eqref{1} which does not exit $D$. 
See \cite{Tony, Gideon} for some algorithms to efficiently sample 
local equilibrium; we remark that this is expected to become more computationally demanding 
as temperature increases. 

To extrapolate the exit event at low temperature from the exit events at high temperature, 
we need the dynamics at both temperatures to be in local equilibrium. We note that 
the changes (M1)-(M2) in modified TAD are actually a practical way to get rid of the 
error associated with the assumption (H1) in TAD.

\item{\bf Stopping time: (M3).} 

In (M3) we introduce a stopping $T_{stop}$ 
such that, with probability $1$, the shortest extrapolated 
low temperature exit time is found by time $T_{stop}$. (Recall that with the 
stopping time of TAD, we have only a confidence level $1-\delta$.) 

Note that for the stopping time $T_{stop}$ to be implemented in 
~\eqref{modstop2}, we need some a priori knowledge about energy 
barriers, in particular a lower bound $E_{min}>0$ for all the 
differences $V(x_i)-V(x_0)$, where $x_i$ ranges over the saddle 
points on the boundary of a given basin:
\begin{itemize}
 \item[(H3')] {There 
is a minimum, $E_{min}$, to all the energy barriers}: $$\forall i \in
\{1,\ldots k\}, V(x_i) - V(x_0) \ge E_{min}.$$ 
\end{itemize}
 {If a lower bound $E_{min}$ is known}, then
we can choose $C$ accordingly so that in equation~\eqref{modstop2} we obtain 
\begin{equation}\label{modifiedstop}
 T_{stop} = T_{min}^{lo} e^{(\beta^{hi}-\beta^{lo})E_{min}}.
\end{equation}
A simple computation then shows that {under assumption (H3')}, any high temperature exit 
time occurring after $T_{stop}$ cannot extrapolate to a low temperature 
exit time smaller than $T_{min}^{lo}$. To see that~\eqref{modifiedstop} leads 
to an efficient algorithm, recall that TAD is expected to be correct only in the regime where 
$\beta^{hi} \gg E_{min}$, which since $\beta^{hi}\ll\beta^{lo}$ means the 
exponential in~\eqref{modifiedstop} should be very small.

As the computational savings of TAD comes from the
fact that the simulation time of the exit step, namely $T_{stop}$, is
much smaller than the exit time that would have been observed at low
temperature, the choice of stopping time in TAD is of critical importance. 
Both of the stopping times~\eqref{TADstop} and~\eqref{modstop}
are used in practice; see {{\cite{Voter3}}} for a presentation of TAD 
with the stopping formula~\eqref{modstop}, and 
\cite{Montalenti2} for an application. {The original stopping
time~{\eqref{TADstop}} requires a lower 
bound for the prefactors in the Arrhenius law~{\eqref{prefactors}} (see
assumption (H3) above, in the remarks following
Algorithm~{\ref{alg1}}). The stopping time~{\eqref{modstop}} requires an 
assumption on the minimum energy barriers; see assumption (H3') above}. The formula~\eqref{modstop} 
may be preferable in case minimum energy barriers are known, 
since it is known to scale 
better with system size than \eqref{TADstop}. The formula~\eqref{TADstop} is 
advantageous if minimum energy barriers are unknown but a reasonable 
lower bound for the minimum prefactor $\nu_{min}$ is available.

{We have chosen  
the stopping time~{\eqref{modstop}} instead of~{\eqref{TADstop}} 
mostly for mathematical convenience -- in particular so that in our 
Section~{\ref{sec:idealTAD}} analysis we do not have the error $\delta$ 
associated with~{\eqref{TADstop}}. A similar  
analysis can be done under assumption (H3) with the stopping time~{\eqref{TADstop}}, 
modulo the error $\delta$.}
\end{itemize}

We comment that modified TAD is an algorithm which can be implemented in 
practice, and which circumvents the error in the original TAD 
arising from the assumption (H1).

\section{Idealized TAD and mathematical analysis}\label{sec:idealTAD}

In this section we show that under certain idealizing assumptions, 
namely (I1)-(I3) and (A1) below, modified TAD is {\it exact} in the sense that the 
simulated metastable dynamics ${\hat S}(t)_{t\ge 0}$ 
has the same law as the true low temperature metastable dynamics $S(X_t^{lo})_{t\ge 0}$. 
We call this idealization of modified TAD {\it idealized TAD}. 
Our analysis will show that idealized TAD and modified TAD 
agree in the limit $\beta^{hi},\beta^{lo} \to \infty$ and $T_{corr} \to \infty$. 
Since idealized TAD is exact, it follows that modified TAD is exact in the 
limit $\beta^{hi},\beta^{lo} \to \infty$ and $T_{corr} \to \infty$. 

In idealized TAD, we assume that at the end of the decorrelation step and 
at the start of the exit step of modified TAD,  
we are in {\it exact} local equilibrium; see (A1) and (I1). We formalize this using the 
notion of quasistationary distributions, defined below. We also assume that the way in which 
we exit near a given saddle point $x_i$ in the exit step does not affect 
the metastable dynamics in the decorrelation step; see (I2). The remaining  
idealization, whose relation to modified TAD is maybe not so clear at
first sight, is to replace the exponential  
$\exp[-(\beta^{hi}-\beta^{lo})(V(x_i)-V(x_0))]$ of~\eqref{arrhenius} 
with a certain quantity $\Theta_i$ depending on 
the flux of the quasistationary distribution across $\partial D$; see~(I3).  
In Section~\ref{sec:theta} we justify this by showing that the two 
agree asymptotically as $\beta^{hi},\beta^{lo} \to \infty$ in a one-dimensional setting. 

\subsection{Notation and quasistationary distribution}\label{section2b}
Here and throughout, $D$ is an (open) domain with $C^2$ boundary $\partial D$ 
and $X_t^x$ is a stochastic process evolving according to~\eqref{1} starting at 
$X_0^x = x$ (we suppress the superscript where it is not needed). We  
write $\PP(\cdot)$ and $\E[\cdot]$ for various probabilities and expectations,  
the meaning of which will be clear from context. We write 
$Y \sim \mu$ for a random variable sampled from the probability 
measure $\mu$ and $Y\sim {\cal E}(\alpha)$ for an 
exponentially distributed random variable with parameter $\alpha$. 

Recalling the notation 
of Section~\ref{sec:TAD}, we assume that $\partial D$ is partitioned 
into $k$ (Lebesgue measurable) subsets $\partial D_i$ containing the saddle points $x_i$ of $V$, 
$i=1,\ldots,k$ (see Fig~\ref{fig1}):
\begin{equation*}
\partial D = \cup_{i=1}^k \partial D_i \hskip10pt \hbox{ and } 
\hskip10pt \partial D_i \cap \partial D_j = \emptyset \hbox{ if } i \neq j.
\end{equation*}
We assume that any exit through $\partial D_i$ is associated to the
saddle point $x_i$ in Step 3 of TAD. In other
words, $\partial D_i$ corresponds to the basin of attraction of the
saddle point $x_i$ for the nudged elastic band method. 

\begin{figure}
\begin{center}
\includegraphics[scale=0.5]{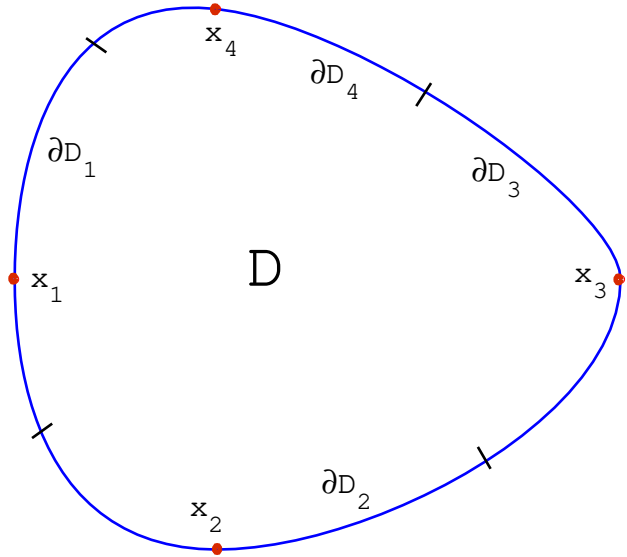}
\end{center}
\caption{The domain $D$ with boundary partitioned into $\partial
  D_1,\ldots,\partial D_4$ (here $k=4$) 
by the black line segments. $V$ has exactly one saddle point in each
$\partial D_i$, located at $x_i$. } 
\label{fig1}
\end{figure}

Essential to the analysis below will be the notion of {\it quasistationary distribution}, 
which we define below, recalling some facts which will be needed in our analysis. Consider the 
infinitesimal generator of~\eqref{1}: 
\begin{equation*}
L = -\nabla V \cdot \nabla +\beta^{-1} \Delta, 
\end{equation*}
and let $(u,-\lambda)$ be the principal eigenvector/eigenvalue pair for $L$ with 
homogeneous Dirichlet (absorbing) boundary conditions on $\partial D$:
\begin{equation}\label{eq:ulambda}
\left\{
\begin{aligned}
Lu&=-\lambda u \text{ in } D, \\
u&=0 \text{ on } \partial D.
\end{aligned}
\right.
\end{equation}
It is known (see \cite{Tony})
that $u$ is signed and $\lambda >0$; we choose $u>0$ and for the moment do not specify a normalization. 
Define a probability measure $\nu$ on $D$ by 
\begin{equation}\label{qsd}
d\nu = \frac{u(x) e^{-\beta V(x)}\,dx}{\int_D u(x) e^{-\beta V(x)}\,dx}.
\end{equation}
The measure $\nu$ is called the {\it quasistationary distribution} (QSD) on $D$; 
the name comes from the fact that $\nu$ has 
the following property: for $(X_t)_{t \ge 0}$ a solution to~\eqref{1},
starting from any distribution with support in $D$,
\begin{equation}\label{eq:local_eq}
\nu(A) = \lim_{t\to \infty} \PP(X_t \in A\,\big|\, X_s \in D,\,0\le s
\le t) \hskip20pt \hbox{for any measurable set }A\subset D.
\end{equation}
The following is proved in \cite{Tony}, and will be essential for our results:
\begin{theorem}\label{theorem0a}
Let $X_t$ be a solution to~\eqref{1} with $X_0 \sim \nu$, and let 
\begin{equation*}
\tau = \inf\{t>0\,:\,X_t \notin D\}
\end{equation*}
Then: (i) $\tau \sim {\cal E}(\lambda)$ and (ii) $\tau$ and $X_\tau$ are independent.
\end{theorem}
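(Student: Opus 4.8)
The plan is to exploit the variational characterisation of $\nu$ together with the eigenrelation \eqref{eq:ulambda}, working with the sub-Markovian semigroup of the dynamics killed on exiting $D$. Write $(P_t)_{t\ge0}$ for this semigroup, so that for bounded measurable $f$ on $D$
\begin{equation*}
P_tf(x)=\E\big[f(X_t^x)\,\mathbf 1_{t<\tau}\big],
\end{equation*}
where $\mathbf 1_{t<\tau}$ denotes the indicator that $X_s^x\in D$ for all $s\in[0,t]$. Since $L$ with homogeneous Dirichlet boundary conditions is self-adjoint on $L^2(D,e^{-\beta V}\,dx)$, the eigenrelation \eqref{eq:ulambda} lifts to the semigroup as $P_tu=e^{-\lambda t}u$, and then for every bounded measurable $f$ on $D$,
\begin{equation*}
\E_\nu\big[f(X_t)\,\mathbf 1_{t<\tau}\big]
=\frac{\int_D P_tf\,u\,e^{-\beta V}\,dx}{\int_D u\,e^{-\beta V}\,dx}
=\frac{\int_D f\,(P_tu)\,e^{-\beta V}\,dx}{\int_D u\,e^{-\beta V}\,dx}
=e^{-\lambda t}\int_D f\,d\nu ,
\end{equation*}
where the middle step is self-adjointness of $P_t$ on $L^2(D,e^{-\beta V})$ and the last step uses $P_tu=e^{-\lambda t}u$ and the definition \eqref{qsd} of $\nu$.

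Taking $f\equiv 1$ in this identity gives $\PP_\nu(\tau>t)=e^{-\lambda t}$ for all $t\ge0$, i.e. $\tau\sim\mathcal E(\lambda)$, which is assertion (i). For assertion (ii) I would feed the same identity into the Markov property at the deterministic time $t$. Fix a bounded measurable $\phi$ on $\partial D$ and set $\Psi(y)=\E_y[\phi(X_\tau)]$ for $y\in D$, which is bounded and measurable since it is the integral of $\phi$ against the exit (harmonic) measure of the killed process. On $\{\tau>t\}$ the process restarts from $X_t\in D$ and exits $D$ at the same point as before, so $\E_\nu[\phi(X_\tau)\mid\mathcal F_t]=\Psi(X_t)$ there, and hence
\begin{equation*}
\E_\nu\big[\phi(X_\tau)\,\mathbf 1_{\tau>t}\big]
=\E_\nu\big[\mathbf 1_{t<\tau}\,\Psi(X_t)\big]
=e^{-\lambda t}\int_D\Psi\,d\nu
=e^{-\lambda t}\,\E_\nu[\phi(X_\tau)]
=\PP_\nu(\tau>t)\,\E_\nu[\phi(X_\tau)],
\end{equation*}
applying the displayed identity with $f=\Psi$. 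Choosing $\phi=\mathbf 1_A$ for measurable $A\subset\partial D$ yields $\PP_\nu(\tau>t,\,X_\tau\in A)=\PP_\nu(\tau>t)\,\PP_\nu(X_\tau\in A)$ for all $t\ge0$; since the sets $\{\tau>t\}$, $t\ge0$, form a $\pi$-system generating the Borel subsets of $[0,\infty)$, a monotone-class (Dynkin) argument upgrades this to independence of $\tau$ and $X_\tau$, which is assertion (ii).

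The main obstacle is making the passage from the generator to the semigroup fully rigorous: one needs sufficient regularity and integrability of the principal eigenfunction $u$ (boundedness, or at least membership in the domain of the Dirichlet realisation of $L$ in $L^2(D,e^{-\beta V})$), the identification of $(u,-\lambda)$ as a genuine principal eigenpair with $\lambda>0$ and $u>0$ (via the spectral theorem together with a Krein--Rutman / Perron--Frobenius type argument), and the measurability and boundedness up to the boundary of the exit-location functional $y\mapsto\Psi(y)$. All of these ingredients are established in \cite{Tony}; in a self-contained proof I would simply invoke them and reduce the argument to the two displayed computations above.
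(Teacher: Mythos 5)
Your argument is correct, and it reproduces the standard proof of this fact — which the paper does not give directly but cites from \cite{Tony}; the proof there proceeds exactly as you do, via the sub-Markovian semigroup $P_t$ killed on $\partial D$, its self-adjointness in $L^2(D,e^{-\beta V}dx)$, the eigenrelation $P_tu=e^{-\lambda t}u$, and the Markov property with $\Psi(y)=\E_y[\phi(X_\tau)]$ to decouple $\tau$ from $X_\tau$. Your closing $\pi$-system/Dynkin step is also the right way to promote the factorisation $\PP_\nu(\tau>t,X_\tau\in A)=\PP_\nu(\tau>t)\,\PP_\nu(X_\tau\in A)$ to full independence, and the regularity caveats you flag (domain membership of $u$, Krein--Rutman positivity, measurability of $\Psi$) are precisely the technical points handled in \cite{Tony}.
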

We will also need the following formula from \cite{Tony} for the exit point distribution:
\begin{theorem}\label{theorem0b}
Let $X_t$ and $\tau$ be as in Theorem~\ref{theorem0a}, and let 
$\sigma_{\partial D}$ be Lebesgue measure on $\partial D$. The measure $\rho$ on  
$\partial D$ defined by  
\begin{equation}\label{eq:rho}
 d\rho = -\frac{\partial_n\left(u(x) e^{-\beta V(x)}\right)\,d\sigma_{\partial D}}{\beta\lambda\int_D u(x) e^{-\beta V(x)}\,dx}
\end{equation}
is a probability measure, and for any measurable $A \subset \partial D$, 
\begin{equation*}
 \PP(X_\tau \in A) = \rho(A).
\end{equation*}
\end{theorem}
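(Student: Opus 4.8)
The plan is to identify the law of $X_\tau$ by computing $\E[f(X_\tau)]$ for an arbitrary test function $f\in C^2(\overline{D})$ and matching the outcome against $\int_{\partial D} f\,d\rho$. The two ingredients are Dynkin's formula applied up to the exit time $\tau$, and two integrations by parts over $D$ which exploit the Dirichlet condition $u=0$ on $\partial D$ together with the eigenvalue relation $Lu=-\lambda u$.

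First I would apply Dynkin's formula to $f(X_{t\wedge\tau})$ with $X_0\sim\nu$ and let $t\to\infty$. By Theorem~\ref{theorem0a}, $\tau\sim{\cal E}(\lambda)$, so $\tau<\infty$ almost surely with $\E[\tau]<\infty$; since $Lf$ is bounded on $\overline{D}$, dominated convergence gives
\[
\E[f(X_\tau)]=\int_D f\,d\nu+\E\!\left[\int_0^\tau (Lf)(X_s)\,ds\right].
\]
Next I would use the fact underlying Theorem~\ref{theorem0a} --- namely that the semigroup of the process~\eqref{1} killed on $\partial D$ is self-adjoint with respect to $e^{-\beta V}\,dx$ and $u$ is its principal eigenfunction --- which shows that, for $X_0\sim\nu$, the sub-probability density of $X_s$ on $\{\tau>s\}$ equals $e^{-\lambda s}u(x)e^{-\beta V(x)}\big/\!\int_D u\,e^{-\beta V}\,dx$. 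Integrating over $s\in[0,\infty)$ then yields
\[
\E\!\left[\int_0^\tau (Lf)(X_s)\,ds\right]=\frac{1}{\lambda\int_D u\,e^{-\beta V}\,dx}\int_D (Lf)\,u\,e^{-\beta V}\,dx.
\]

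The remaining step is a deterministic computation. Writing $(Lf)\,e^{-\beta V}=\beta^{-1}\nabla\!\cdot\!\bigl(e^{-\beta V}\nabla f\bigr)$ and integrating by parts once, the boundary term vanishes because $u=0$ on $\partial D$; integrating by parts a second time to move the derivative back onto $f$, and using $\beta^{-1}\nabla\!\cdot\!\bigl(e^{-\beta V}\nabla u\bigr)=(Lu)e^{-\beta V}=-\lambda u\,e^{-\beta V}$, one gets
\[
\int_D (Lf)\,u\,e^{-\beta V}\,dx=-\beta^{-1}\!\int_{\partial D} f\,e^{-\beta V}\partial_n u\,d\sigma_{\partial D}-\lambda\!\int_D f\,u\,e^{-\beta V}\,dx.
\]
Substituting this into the previous two displays, the volume integrals $\int_D f\,u\,e^{-\beta V}\,dx$ cancel exactly; and since $u=0$ on $\partial D$ implies $e^{-\beta V}\partial_n u=\partial_n(u\,e^{-\beta V})$ there, I obtain
\[
\E[f(X_\tau)]=-\frac{1}{\beta\lambda\int_D u\,e^{-\beta V}\,dx}\int_{\partial D} f\,\partial_n\!\bigl(u\,e^{-\beta V}\bigr)\,d\sigma_{\partial D}=\int_{\partial D} f\,d\rho .
\]
Since this holds for every $f\in C^2(\overline{D})$, the law of $X_\tau$ is $\rho$. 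That $\rho$ is a probability measure is then automatic: $f\equiv 1$ (i.e. $A=\partial D$) gives total mass $\PP(X_\tau\in\partial D)=\PP(\tau<\infty)=1$, while $\rho\ge 0$ because $\partial_n(u\,e^{-\beta V})\le 0$ on $\partial D$ by Hopf's lemma applied to $u>0$ in $D$ with $u=0$ on $\partial D$. (Intuitively the calculation just integrates the probability current of the density $e^{-\lambda s}u\,e^{-\beta V}/\!\int_D u\,e^{-\beta V}\,dx$ through each piece of $\partial D$ over all times.)

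The main obstacle will be the careful justification of the passage to the limit in Dynkin's formula and of the integrations by parts: one needs $\E\bigl[\int_0^\tau |Lf|(X_s)\,ds\bigr]<\infty$, which rests on the exponential tail of $\tau$ from Theorem~\ref{theorem0a}, and enough regularity of the principal eigenfunction up to the boundary, namely $u\in C^1(\overline{D})$, which follows from standard elliptic regularity since $V$ is smooth and $\partial D$ is $C^2$. Everything else is bookkeeping.
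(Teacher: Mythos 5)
Your argument is correct and is essentially the standard proof (the paper itself defers to~\cite{Tony} for this result). The chain Dynkin's formula $\to$ occupation-density of the killed process started from the QSD $\to$ two integrations by parts using $u|_{\partial D}=0$ and $Lu=-\lambda u$ is precisely how the exit-point law is computed in the reference, and all the delicate points are handled: the passage $t\to\infty$ in Dynkin is justified via $\E[\tau]<\infty$ and boundedness of $Lf$; the occupation density $e^{-\lambda s}u\,e^{-\beta V}/\!\int_D u\,e^{-\beta V}$ is exactly the eigenfunction identity $L^*(u\,e^{-\beta V})=-\lambda\,u\,e^{-\beta V}$ that underlies Theorem~\ref{theorem0a} (you correctly flag that this goes a hair beyond the bare statement of Theorem~\ref{theorem0a} as quoted, but it is the defining property of $\nu$); the identification $e^{-\beta V}\partial_n u=\partial_n(u\,e^{-\beta V})$ on $\partial D$ uses $u|_{\partial D}=0$; nonnegativity comes from Hopf's lemma; and total mass one comes from $f\equiv 1$. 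The only small cosmetic remark is that the cancellation of the volume terms is a restatement of the self-consistency $\E[f(X_\tau)]=\E_\nu[f]+\frac{1}{\lambda}\E_\nu[Lf]$ — i.e. the Dirichlet form identity $\int_D (Lf)\,d\nu=-\lambda\bigl(\int_D f\,d\nu-\int_{\partial D} f\,d\rho\bigr)$ — so nothing hidden is happening there. No gaps.
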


As a corollary of these two results we have the following, which will be central to our analysis:
\begin{corollary}\label{corollary1}
Let $X_t$, $\tau$ and $\rho$ be as in Theorems~\ref{theorem0a}-\ref{theorem0b}, 
and define 
\begin{equation}\label{eq:pi}
p_i =\rho(\partial D_i)
\end{equation}
to be the exit probability through $\partial D_i$. Let $I$ be the
discrete random variable defined by: for $i = 1, \ldots, k$,
$$I=i \text{ if and only if } X_\tau \in \partial D_i.$$
Then (i) $\tau \sim {\cal E}(\lambda)$, (ii) $\PP(I=i) = p_i$, and 
(iii) $\tau$ and $I$ are independent.
\end{corollary}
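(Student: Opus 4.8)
The plan is to obtain all three assertions directly from Theorems~\ref{theorem0a} and~\ref{theorem0b}, exploiting that $I$ is a measurable function of the exit location $X_\tau$. Part~(i) is immediate, since $\tau$ is defined here exactly as in Theorem~\ref{theorem0a} and $X_0\sim\nu$, so $\tau\sim{\cal E}(\lambda)$ by Theorem~\ref{theorem0a}(i).

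For part~(ii), I would note that $\{I=i\}=\{X_\tau\in\partial D_i\}$ by construction and that each $\partial D_i$ is Lebesgue measurable, hence $\rho$-measurable; Theorem~\ref{theorem0b} then gives $\PP(I=i)=\PP(X_\tau\in\partial D_i)=\rho(\partial D_i)=p_i$. As a consistency check, since the $\partial D_i$ partition $\partial D$ and $\rho$ is a probability measure, $\sum_{i=1}^k p_i=\rho(\partial D)=1$, so $I$ is a well-defined $\{1,\dots,k\}$-valued random variable (one also uses $\tau>0$ a.s., which follows from $\tau\sim{\cal E}(\lambda)$, so that $X_\tau\in\partial D$ genuinely).

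For part~(iii), the key point is that $I=g(X_\tau)$ with $g=\sum_{i=1}^k i\,\mathbf{1}_{\partial D_i}$ a measurable map on $\partial D$; since independence is stable under composition with a measurable function, the independence of $\tau$ and $X_\tau$ asserted in Theorem~\ref{theorem0a}(ii) transfers to $\tau$ and $I$. Concretely, for every Borel set $B\subset\R$ and every $i\in\{1,\dots,k\}$ one has
$$\PP(\tau\in B,\,I=i)=\PP(\tau\in B,\,X_\tau\in\partial D_i)=\PP(\tau\in B)\,\PP(X_\tau\in\partial D_i)=\PP(\tau\in B)\,\PP(I=i),$$
the middle equality being Theorem~\ref{theorem0a}(ii). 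I do not expect any substantive obstacle here: the corollary is essentially a repackaging of the two cited theorems, and the only point worth isolating is that collapsing $X_\tau$ to the index of the boundary piece it hits is a measurable operation, so that both the law of $I$ and its independence from $\tau$ are inherited automatically.
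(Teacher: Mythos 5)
Your proof is correct and follows exactly the route the paper intends: the paper presents Corollary~\ref{corollary1} without a written proof, as an immediate consequence of Theorems~\ref{theorem0a} and~\ref{theorem0b}, and your argument---(i) directly from Theorem~\ref{theorem0a}(i), (ii) from Theorem~\ref{theorem0b} applied to $A=\partial D_i$, and (iii) from the stability of independence under the measurable map $X_\tau\mapsto I$---is precisely the implicit reasoning. Nothing is missing.
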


Throughout we omit the dependence of $\lambda$, $\nu$, and $\rho$ on 
the basin $D$; it should be understood from context.

\begin{remark} {We assume that $D$ has $C^2$ 
boundary so that standard elliptic regularity results and trace
theorems give a meaning to the formula~{\eqref{eq:rho}} used to define $\rho$ 
in Theorem~{\ref{theorem0b}}. For basins of attraction 
this assumption will not be satisfied, as 
the basins will have 
``corners''. This is actually a minor technical point. The probability
measure $\rho$ can be defined for any Lipschitz domain 
$D$ using the
following two steps: first, $\rho$ can be defined in
$H^{-1/2}(\partial \Omega)$ using the definition (equivalent
to~{\eqref{eq:rho}}): for any $v \in H^{1/2} (\partial D)$}
$$\langle v , d\rho \rangle =\frac{\int_D  ( - \beta^{-1} \nabla w \cdot \nabla  u + \lambda w u)  \exp(-\beta V) }{\lambda \int_{D} u \exp(-\beta V)}
$$
{where $w \in H^1(D)$ is any lifting of $v$ ($w|_{\partial D}=v$). Second, it is easy to check
that $\rho$ actually defines a} {\em non-negative} {distribution on
$\partial D$, for example by using as a lifting the solution to}
$$
\left\{
\begin{aligned}
L w &= 0  \text{ in } D,\\
w&=v \text{ on } \partial D,
\end{aligned}
\right.
$$
{since, by the maximum principle, $w \ge 0$, and then, 
$\langle v , d\rho \rangle
= \frac{\int_D  
    \lambda w u  \exp(-\beta V) }{\lambda \int_{D} u \exp(-\beta
    V)}$. One finally concludes using a Riesz representation theorem
  due to Schwartz:
  any non-negative distribution with total mass one defines a
  probability measure.}
\end{remark}
\subsection{Idealized TAD}\label{section3}

In this section we consider an idealized version of 
modified TAD, which we call 
{\it idealized TAD}. The idealizations, (I1)-(I3) below,  
are introduced so that the algorithm 
can be rigorously analyzed using the mathematical 
formalisms in Section~\ref{section2b}. 

\begin{itemize}
\item[(I1)] At the start of the exit step, the high temperature dynamics is 
initially distributed according to the QSD in $D$: $X_{T_{sim}}^{hi} \sim \nu^{hi}$;
\item[(I2)] At the end of the exit step, the extrapolated low 
temperature exit point $X_{T_{tad}}^{lo}$ 
is sampled exactly from the conditional exit point distribution in 
$\partial D_{I_{min}^{lo}}$ at low temperature: 
\begin{equation}\label{exactexit}
X_{T_{tad}}^{lo} \sim \left[\rho^{lo}\left(\partial D_{I_{min}^{lo}}\right)\right]^{-1} \rho^{lo}|_{\partial D_{I_{min}^{lo}}}
\end{equation}
\item[(I3)]In the exit step, the quantity 
\begin{equation*}
e^{-(\beta^{hi}-\beta^{lo})(V(x_i)-V(x_0))}
\end{equation*}
is everywhere replaced by 
\begin{equation}\label{ratios}
 \Theta_i \equiv \frac{\lambda^{hi} p_i^{hi}}{\lambda^{lo} p_i^{lo}},
\end{equation}
where, as in~\eqref{eq:pi}, $p_i^{lo}=\rho^{lo}(\partial
D_i)$ and $p_i^{hi}=\rho^{hi}(\partial D_i)$. Thus, the 
extrapolation equation~\eqref{arrhenius} 
is replaced by 
 \begin{equation}\label{extrapolate}
T_i^{lo} = T_{i}^{hi}\Theta_i
\end{equation}
and the formula for updating $T_{stop}$ is:
\begin{equation}\label{stop}
 T_{stop} = T_{min}^{lo}/C
\end{equation}
where $C$ is chosen so that $C\le \min_{1\le i\le k} \Theta_i$. 
\end{itemize}

We state idealized TAD below as an ``algorithm'', even 
though it is not practical: in general we cannot exactly sample 
$\nu^{hi}$ or the exit distributions  
$\left[\rho^{lo}\left(\partial D_{i}^{lo}\right)\right]^{-1} \rho^{lo}|_{\partial D_{i}^{lo}}$,
and the quantities $\Theta_i$ are not known 
in practice. (See the discussion below Algorithm~\ref{alg3}.)

For the reader's convenience 
we put in boxes those steps of idealized 
TAD which are different from modified TAD.

\begin{algorithm}[Idealized TAD]\label{alg3}
Let $X_0^{lo}$ be in the basin $D$, 
set the low temperature clock time to zero: $T_{tad} =0$, 
let $T_{corr}>0$ be a (basin-dependent) decorrelation 
time, and iterate on the visited basins the following:

\begin{enumerate}[leftmargin=0.79in]
\item[]{\bf Decorrelation step:}
\item[1.]{Starting at time $t = T_{tad}$, evolve $X_t^{lo}$ at temperature $\beta = \beta^{lo}$ 
according to~\eqref{1} in the current basin $D$.}
\item[2.]{If $X_t^{lo}$ exits $D$ at a time 
$T_{tad} + \tau < T_{tad} + T_{corr}$, 
then set 
\begin{equation*}
{\hat S}(t) = S(D), \hskip10pt t \in [T_{tad},T_{tad}+\tau],
\end{equation*}
advance the low temperature clock by $\tau$:
$T_{tad} = T_{tad} + \tau$,
then go back to Step 1, where $D$ is now the 
new basin. Otherwise, set
\begin{equation*}
{\hat S}(t) = S(D), \hskip10pt t \in [T_{tad},T_{tad}+T_{corr}],
\end{equation*}
advance the low temperature clock by $T_{corr}$: $T_{tad} = T_{tad} + T_{corr}$, 
and initialize the exit step by setting $T_{sim} = 0$ and $T_{stop} = \infty$. 
Then proceed to the exit step.}
\end{enumerate}

\begin{enumerate}[leftmargin=0.79in]
\item[]{\bf Exit step:}
\end{enumerate}

\fcolorbox{black}[HTML]{E9F0E9}{\parbox{\textwidth}{
\begin{enumerate}
\item[1.]{Sample $X_{T_{sim}}^{hi}$ from the QSD at high temperature in $D$: 
$X_{T_{sim}}^{hi} \sim \nu^{hi}$.}
\end{enumerate}}}

\begin{enumerate}[leftmargin=0.79in]
\item[2.]{Evolve $X_t^{hi}$ at $\beta = \beta^{hi}$ starting 
at $t=T_{sim}$ until the 
first time after $T_{sim}$ at which it exits $D$.
Call this time $T_{sim}+\tau$.}
\item[3.]{Record the set $\partial D_i$ through which $X_t^{hi}$ exited $D$.}
\item[4.]{Advance the simulation clock by $\tau$: $T_{sim} = T_{sim} + \tau$.}
\end{enumerate}

\fcolorbox{black}[HTML]{E9F0E9}{\parbox{\textwidth}{
\begin{enumerate}
\item[5.]{If an exit through $\partial D_i$ has already been observed, go to Step 8. 
If an exit through $\partial D_i$ has not yet been observed, set $T_i^{hi} = T_{sim}$ and:
\begin{equation}\label{idealarrhenius}
T_i^{lo} = T_{i}^{hi}\,\Theta_i, \hskip20pt \Theta_i \equiv \frac{\lambda^{hi}p_i^{hi}}{\lambda^{lo}p_i^{lo}}.
\end{equation}
}
\end{enumerate}}}

\begin{enumerate}[leftmargin=0.79in]
\item[6.]{Update the lowest extrapolated exit time and corresponding exit spot:
\begin{align*}
T_{min}^{lo} &= \min\{T_{min}^{lo}, T_i^{lo}\}\\ 
I_{min}^{lo} &= i \hskip10pt\hbox{if}\hskip10pt T_{min}^{lo} = T_i^{lo}.
\end{align*}
}
\end{enumerate}

\fcolorbox{black}[HTML]{E9F0E9}{\parbox{\textwidth}{
\begin{enumerate}
\item[7.]{Update $T_{stop}$:
\begin{equation}\label{stop2}
T_{stop} = T_{min}^{lo}/C,\hskip20pt C \le \min_{1\le i\le k}\Theta_i. 
\end{equation}
}
\end{enumerate}}}
\begin{enumerate}[leftmargin=0.79in]
\item[8.]{If $T_{sim} \le T_{stop}$, go back to Step 1 of the exit step; 
otherwise, proceed to Step 9.}
\end{enumerate}

\begin{enumerate}[leftmargin=0.79in]
\item[9.]{Set 
\begin{equation*}
 {\hat S}(t) = S(D) \hskip10pt {for} \hskip10pt t \in [T_{tad},T_{tad}+T_{min}^{lo}],
\end{equation*}
and advance the low temperature simulation clock by 
$T_{min}^{lo}$:
\begin{equation*}
T_{tad} = T_{tad}+T_{min}^{lo}.
\end{equation*}}
\end{enumerate}

\fcolorbox{black}[HTML]{E9F0E9}{\parbox{\textwidth}{
\begin{enumerate}
\item[10.]{Let 
\begin{equation*}
X_{T_{tad}}^{lo} \sim \left[\rho^{lo}\left(\partial D_{I_{min}^{lo}}\right)\right]^{-1} \rho^{lo}|_{\partial D_{I_{min}^{lo}}}.
\end{equation*}
Then go back to the decorrelation step, the basin $D$ now 
being the one obtained by exiting through $X_{T_{tad}}^{lo}$.}
\end{enumerate}}}

\end{algorithm}
\vskip10pt

Below we comment in more detail on idealized TAD. 
\begin{itemize}
\item {\bf The quasistationary distribution in $D$: (I1) and (A1).}

In idealized TAD, the convergence to local equilibrium (see (M1)
and (M2) above) is assumed to
be reached, and this is made precise using 
the QSD $\nu$. In particular, we start the 
high temperature exit search exactly at the QSD $\nu^{hi}$; see (I1). 
We will also assume the low temperature dynamics reaches $\nu^{lo}$ at the 
end of the decorrelation step: 
\begin{itemize}
\item[(A1)] After the decorrelation step of idealized TAD, the low temperature dynamics is 
distributed according to the QSD in $D$: $X_{T_{tad}}^{lo} \sim \nu^{lo}$. 
\end{itemize}
This will be crucial for extrapolating the exit event 
at low temperature. Assumption (A1) is justified by the 
fact that the law of $X_t^{lo}$ in the decorrelation step 
approaches $\nu^{lo}$ exponentially fast in $T_{corr}$; 
see \cite{Tony, Gideon} for details. We also refer to~\cite{Tony,Gideon} 
for a presentation of algorithms which can be used to sample the QSD. 

\item{\bf The exit position: (I2).} 

To get exact metastable dynamics, 
we have to assume that the way the dynamics leaves $D$ near a 
given saddle point $x_i$ does not affect the 
metastable dynamics in the decorrelation step; see (I2). This 
can be justified in the small temperature regime by using Theorem~\ref{theorem0b} 
and some exponential decay results on the normal derivative of the QSD 
away from saddle points. Indeed, the conditional probability that, given the dynamics leaves 
through $\partial D_i$, it leaves outside a neighborhood of $x_i$ is of order $e^{-c\beta}$ 
as $\beta \to \infty$ (for a constant $c>0$); see \cite{Helffer,Tony2}. 

\item {\bf Replacing the Arrhenius law extrapolation rule: (I3).} 

In idealized TAD, we replace the extrapolation formula~\eqref{arrhenius} based
on the Arrhenius law by the idealized formulas~\eqref{ratios}-~\eqref{extrapolate}; 
see (I3). This is a
severe modification, since it makes the algorithm
impractical. In particular the quantities $\lambda^{lo}$ and
$p^{lo}_i$ are not known: if they were, it would be very easy to
simulate the exit event from~$D$; see Corollary~\ref{corollary1}
above. 

It is the aim of Section~\ref{sec:theta} below to explain how the small temperature
assumption is used to get practical estimates of the ratios $\Theta_i$. For simplicity we 
perform this small temperature analysis in one dimension. 
We will show that $\Theta_i$ is indeed close to the formula 
$\exp[-(\beta^{hi}-\beta^{lo})(V(x_i)-V(x_0))]$ used in the
original and modified TAD; compare~\eqref{idealarrhenius} 
with~\eqref{arrhenius} and~\eqref{arrhenius2}. 
We expect the same relation to be true in higher dimensions 
under appropriate conditions; this will be the subject of another paper.
\end{itemize}

In the analysis below, we need idealizations (I1) and (I3) to exactly 
replicate the law of the low temperature exit time and exit region in the exit 
step; see Theorem~\ref{theorem1} below. 
With (I1) and (I3), the inferred low temperature exit events are
statistically exact. This is based in particular on (A1), namely the
fact that the low temperature process is distributed according to
$\nu^{lo}$ at the end of the decorrelation step. In addition, after an
exit event, the
dynamics in the next decorrelation step depends on the exact 
exit {\it point} in $\partial D_i$: this is why we also need (I2) 
to get exact metastable dynamics; see Theorem~\ref{mainthm} below.

\subsection{Idealized TAD is exact}\label{section4}

The aim of this section is to prove the following result:
\begin{theorem}\label{mainthm}
Let $X_t^{lo}$ evolve according to~\eqref{1} at $\beta = \beta^{lo}$. 
Let ${\hat S(t)}$ be the metastable dynamics produced by 
Algorithm~\ref{alg3} (idealized TAD), assuming (A1), and let 
idealized TAD have the same initial condition as $X_t^{lo}$. Then: 
\begin{equation*}
{\hat S}(t)_{t\ge 0} \sim S(X_t^{lo})_{t\ge 0},
\end{equation*}
that is, the metastable dynamics produced by idealized TAD 
has the same law as the (exact) low temperature metastable dynamics.
\end{theorem}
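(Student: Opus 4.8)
The plan is to decompose the trajectory $S(X_t^{lo})_{t\ge 0}$ into the successive sojourns in the visited basins, and to show inductively that idealized TAD reproduces the joint law of (a) the sequence of visited basins, (b) the time spent in each, and (c) the exit point from each — which is precisely the data needed to reconstruct the path $S(X_t^{lo})_{t\ge 0}$. The key structural observation is that the true low temperature process, watched in a given basin $D$ with initial distribution equal to the QSD $\nu^{lo}$, has (by Corollary~\ref{corollary1}) an exit time $\tau\sim{\cal E}(\lambda^{lo})$ and an exit region index $I$ with $\PP(I=i)=p_i^{lo}$, and $\tau\perp I$; moreover conditionally on $\{I=i\}$ the exit point is distributed as $[\rho^{lo}(\partial D_i)]^{-1}\rho^{lo}|_{\partial D_i}$ (Theorem~\ref{theorem0b}), and by the strong Markov property the process restarts from that exit point in the neighboring basin, independently of the past given the exit point. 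So the whole trajectory is a Markov renewal–type process whose one-step kernel is exactly described by $(\lambda^{lo}, (p_i^{lo})_i, (\rho^{lo}|_{\partial D_i})_i)$ together with the map sending an exit point to the basin it enters.

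The induction hypothesis is: at the start of a generic decorrelation step, the low temperature dynamics $X^{lo}_{T_{tad}}$ has the same law as the true process $X^{lo}_t$ conditioned to be at this point of its trajectory (i.e.\ the correct entry distribution into $D$), and $\hat S(\cdot)$ has so far agreed in law with $S(X^{lo}_\cdot)$. The base case is the common initial condition. For the inductive step I would split on whether the low temperature dynamics exits $D$ before time $T_{corr}$: on that event the decorrelation step simply runs the true dynamics, so there is nothing to prove and we re-enter the induction in the new basin. On the complementary event, assumption (A1) gives $X^{lo}_{T_{tad}}\sim\nu^{lo}$ at the end of the decorrelation step, which is exactly the hypothesis $X_0\sim\nu$ required by Theorem~\ref{theorem0a}, Theorem~\ref{theorem0b} and Corollary~\ref{corollary1}; this is the point where the QSD machinery is essential, since it is the memorylessness and the time/region independence that make the decorrelation-cutoff harmless in law.

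The heart of the argument is then the exit step, and here I would invoke Theorem~\ref{theorem1} (stated just after this excerpt) which, under (I1) and (I3), asserts that the pair $(T_{min}^{lo}, I_{min}^{lo})$ produced by the exit step has exactly the law $({\cal E}(\lambda^{lo}), (p_i^{lo})_i)$ with independence — i.e.\ the high temperature search with the $\Theta_i$ extrapolation and the stopping rule $T_{stop}=T_{min}^{lo}/C$, $C\le\min_i\Theta_i$, is a statistically exact sampler of the low temperature exit. (Morally: the high temperature KMC gives $T_i^{hi}\sim{\cal E}(\lambda^{hi}p_i^{hi})$ by Corollary~\ref{corollary1}, so $T_i^{lo}=T_i^{hi}\Theta_i\sim{\cal E}(\lambda^{hi}p_i^{hi}/\Theta_i)={\cal E}(\lambda^{lo}p_i^{lo})$; the choice $C\le\min_i\Theta_i$ guarantees no not-yet-observed saddle can produce a smaller $T_i^{lo}$ past $T_{stop}$, so the minimum is exact; and a minimum of independent exponentials splits as $({\cal E}(\sum_i\lambda^{lo}p_i^{lo}),\ \text{categorical}(p_i^{lo}))=({\cal E}(\lambda^{lo}),(p_i^{lo})_i)$ with independence.) Given this, Step~10 of idealized TAD samples the low temperature exit point from $[\rho^{lo}(\partial D_{I_{min}^{lo}})]^{-1}\rho^{lo}|_{\partial D_{I_{min}^{lo}}}$, which by Theorem~\ref{theorem0b} is exactly the conditional law of $X^{lo}_\tau$ given $\{X^{lo}_\tau\in\partial D_{I_{min}^{lo}}\}$. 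Combining: the triple (basin $D$, time $T_{min}^{lo}$ assigned to $\hat S$, exit point handed to the next basin) has exactly the law of one step of the true Markov renewal decomposition, and by the strong Markov property at $X^{lo}_\tau$ the next basin is entered with the correct distribution, closing the induction.

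Finally I would assemble the pieces: since for every $n$ the first $n$ sojourn-blocks of $\hat S$ agree in law with those of $S(X^{lo}_\cdot)$ — including the holding times $\tau$ or $T_{corr}$ in the decorrelation step and $T_{min}^{lo}$ in the exit step, which sum to the correct clock value $T_{tad}$ — and since (for $\beta^{lo}$ fixed) $T_{tad}\to\infty$ as $n\to\infty$ almost surely (the exit times are genuinely random with positive mean, so finitely many blocks cover any bounded time interval), the finite-dimensional distributions of $(\hat S(t))_{0\le t\le T}$ match those of $(S(X^{lo}_t))_{0\le t\le T}$ for every $T$, giving equality in law of the processes. The main obstacle — and the one place real work is hidden — is the exit-step claim, i.e.\ Theorem~\ref{theorem1}: one must handle the \emph{adaptivity} of the stopping time $T_{stop}=T_{min}^{lo}/C$ (it depends on the running minimum, hence on the data observed so far) and verify carefully that the ``first-passage'' structure of the successive distinct-saddle exit times, combined with $C\le\min_i\Theta_i$, really does force the observed minimum to coincide with the true minimum over \emph{all} saddles with probability one; the splitting-of-exponentials and strong-Markov bookkeeping around it is routine once that is in hand.
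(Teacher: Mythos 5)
Your proposal is correct and takes essentially the same route as the paper: the paper's own proof of Theorem~\ref{mainthm} is just the one-sentence reduction to Theorem~\ref{theorem1} via Corollary~\ref{corollary1}, (A1), (I2), and the observation that the decorrelation step runs the true low-temperature dynamics, and then all the real work is done in Theorem~\ref{theorem1} (via Lemmas~\ref{lem:Tstop}, \ref{lemmasym}, \ref{lemma2} and Proposition~\ref{prop:Thi}). Your inductive Markov-renewal bookkeeping makes this reduction more explicit than the paper does, and you correctly flag that the substantive content — handling the adaptive stopping rule and proving independence of the $T_i^{hi}$'s, which the paper does with a moment-generating-function computation plus Lemma~\ref{lemmasym} — is hidden in Theorem~\ref{theorem1}; the only small imprecision is attributing $T_i^{hi}\sim{\cal E}(\lambda^{hi}p_i^{hi})$ directly to Corollary~\ref{corollary1} when it actually requires the geometric-sum argument of Proposition~\ref{prop:Thi}, but since you defer to Theorem~\ref{theorem1} this does not affect the correctness of the proposal.
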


Due to Corollary~\ref{corollary1}, (A1), (I2), and the fact that the 
low temperature dynamics is simulated exactly during the 
decorrelation step, it suffices to prove that the exit step of idealized 
TAD is exact in the following sense:

\begin{theorem}\label{theorem1}
Let $X_t^{lo}$ evolve according to~\eqref{1} at $\beta = \beta^{lo}$ with 
$X_t^{lo}$ initially distributed according to the QSD in $D$: 
$X_0^{lo} \sim \nu^{lo}$. Let $\tau = \inf \{t>0\,:\,X_t^{lo} \notin D\}$ and $I$ 
be the discrete random variable defined by: for $i=1,\ldots,k$,
 $$I=i \text{ if and only if } X_\tau^{lo} \in \partial D_i.$$ Let $T_{min}^{lo}$ and 
$I_{min}^{lo}$ be the random variables produced by the exit step of
idealized TAD. Then, $(T_{min}^{lo}, I_{min}^{lo})$ has the 
same probability law as $(\tau,I)$:
\begin{equation*}
(T_{min}^{lo}, I_{min}^{lo}) \sim (\tau,I).
\end{equation*}
\end{theorem}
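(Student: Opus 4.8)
The plan is to prove Theorem~\ref{theorem1} by reducing the exit step of idealized TAD to a kinetic Monte Carlo computation built on top of Corollary~\ref{corollary1}. First I would set up the right probabilistic picture. By Corollary~\ref{corollary1} applied at the high temperature, if $X^{hi}_t$ starts from $\nu^{hi}$ and we let $\tau^{hi,(1)}$ be its first exit time from $D$ and $I^{(1)}$ the index of the exit region, then $\tau^{hi,(1)} \sim \mathcal E(\lambda^{hi})$, $\PP(I^{(1)} = i) = p_i^{hi}$, and the two are independent. Moreover, by the strong Markov property combined with assumption (I1), \emph{each} time we re-enter Step~1 of the exit step we resample from $\nu^{hi}$, so the successive exit excursions produce i.i.d.\ pairs $(\tau^{hi,(n)}, I^{(n)})_{n\ge 1}$ with this same law, and the accumulated high-temperature clock after $n$ excursions is $T_{sim}^{(n)} = \sum_{m=1}^n \tau^{hi,(m)}$. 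The quantity $T_i^{hi}$ recorded by the algorithm is $T_{sim}^{(N_i)}$ where $N_i = \inf\{n : I^{(n)} = i\}$ is the excursion index of the \emph{first} exit through $\partial D_i$.

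The key computation is then to identify the law of the extrapolated time $T_i^{lo} = T_i^{hi}\,\Theta_i$ for each $i$, and more importantly the joint law of the family $(T_i^{lo})_{i : N_i < \infty}$ together with which are actually ``active'' before the stopping rule cuts off the search. I would first argue, via a thinning/superposition argument for the Poisson-type structure, that $T_i^{hi} = T_{sim}^{(N_i)}$ is exponentially distributed with parameter $\lambda^{hi} p_i^{hi}$: indeed, $T_i^{hi}$ is the sum of a geometric$(p_i^{hi})$ number of i.i.d.\ $\mathcal E(\lambda^{hi})$ waiting times, which is $\mathcal E(\lambda^{hi} p_i^{hi})$; and the $T_i^{hi}$ for distinct $i$ can be realized as the first arrival times of independent Poisson processes with rates $\lambda^{hi} p_i^{hi}$ (this is the standard KMC/race representation of Corollary~\ref{corollary1}). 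Consequently $T_i^{lo} = \Theta_i T_i^{hi} \sim \mathcal E(\lambda^{hi} p_i^{hi}/\Theta_i) = \mathcal E(\lambda^{lo} p_i^{lo})$ by the very definition~\eqref{ratios} of $\Theta_i$, and the $(T_i^{lo})_{i=1}^k$ are independent. Taking $T_{min}^{lo} = \min_i T_i^{lo}$ and $I_{min}^{lo} = \arg\min_i T_i^{lo}$, the elementary properties of independent exponentials give $T_{min}^{lo} \sim \mathcal E(\sum_i \lambda^{lo}p_i^{lo}) = \mathcal E(\lambda^{lo})$, $\PP(I_{min}^{lo} = i) = p_i^{lo}$, and independence of the two --- which is exactly the law of $(\tau, I)$ for $X^{lo}_t$ started from $\nu^{lo}$, by Corollary~\ref{corollary1} at the low temperature.

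The remaining point --- and the step I expect to be the main obstacle --- is to check that the stopping rule~\eqref{stop2}, $T_{stop} = T_{min}^{lo}/C$ with $C \le \min_i \Theta_i$, does \emph{not} alter this law: one must show that no exit region $\partial D_i$ whose first high-temperature visit occurs after $T_{stop}$ could have produced a value of $T_i^{lo}$ smaller than the current $T_{min}^{lo}$. This is the idealized-TAD analogue of the classical TAD stopping-time argument, but here it holds with probability one rather than $1-\delta$: if $T_i^{hi} > T_{stop} = T_{min}^{lo}/C$ then $T_i^{lo} = \Theta_i T_i^{hi} > \Theta_i T_{min}^{lo}/C \ge T_{min}^{lo}$ since $C \le \Theta_i$, so such an exit can never improve the running minimum. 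Hence the value of $(T_{min}^{lo}, I_{min}^{lo})$ returned at Step~9 coincides almost surely with the one computed from the full (never-stopped) family of exponential races, whose law we identified above. A small amount of care is needed to confirm that the algorithm does terminate almost surely (the search visits some region, hence $T_{min}^{lo} < \infty$, hence $T_{stop} < \infty$, and the strictly increasing $T_{sim}$ eventually exceeds it) and that Step~5's ``first exit through $\partial D_i$'' bookkeeping is consistent with taking $T_i^{hi} = T_{sim}^{(N_i)}$; both are routine. Assembling these pieces yields $(T_{min}^{lo}, I_{min}^{lo}) \sim (\tau, I)$, which is the claim.
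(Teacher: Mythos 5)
Your proposal is correct and follows the same overall architecture as the paper: reduce to the never-stopped search (the observation that $T_i^{hi} > T_{stop}$ forces $T_i^{lo} = \Theta_i T_i^{hi} > T_{min}^{lo}$, so the stopping rule never alters the output --- this is exactly Lemma~\ref{lem:Tstop}), identify the joint law of the family $(T_i^{hi})_{i=1}^k$ as independent ${\cal E}(\lambda^{hi} p_i^{hi})$'s (Proposition~\ref{prop:Thi}), rescale to $T_i^{lo} \sim {\cal E}(\lambda^{lo} p_i^{lo})$ via the definition of $\Theta_i$, and finish with the elementary facts about the minimum and argmin of independent exponentials (Lemma~\ref{lemma2}) matched against Corollary~\ref{corollary1} at $\beta^{lo}$.

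The one place you genuinely diverge from the paper is in how independence of $T^{hi}_1,\ldots,T^{hi}_k$ is established. You invoke the Poisson thinning/superposition theorem: the accumulated clock times $\sum_{j\le n}\tau^{(j)}$ form a rate-$\lambda^{hi}$ Poisson process, each arrival is independently marked $i$ with probability $p_i^{hi}$ (this uses the $\tau^{(j)} \perp I^{(j)}$ independence from Corollary~\ref{corollary1}), and the thinned subprocesses are independent Poisson processes of rates $\lambda^{hi} p_i^{hi}$, whose first arrival times are therefore independent exponentials. This is clean and correct. The paper instead proves independence by a direct computation of the joint moment generating function $\E[\exp(\sum_i s_i T_i^{hi})]$, summing over all orderings of the $N_i$'s and factoring the result using the combinatorial identity of Lemma~\ref{lemmasym} (a symmetric-group partial-fraction identity). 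Your route is shorter and more conceptual, at the price of appealing to a somewhat higher-level external fact (Poisson thinning), whereas the paper's computation is longer but entirely self-contained and elementary. Either way, the logic closes; I don't see any gaps in your argument.
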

The proof of Theorem~\ref{theorem1} will use (I1) and (I3) in particular. 
The theorem shows that the exit event from $D$ produced by idealized TAD
is exact in law compared to the exit event that would have
occurred at low temperature: the random variable $(T_{min}^{lo},I_{min}^{lo})$ 
associated with idealized TAD has the same law as the first exit time 
and location (from $D$) of a dynamics $(X_t^{lo})_{t \ge 0}$ obeying~\eqref{1} with
$\beta=\beta^{lo}$ and $X_0^{lo} \sim \nu^{lo}$.

To begin, we provide a simple lemma which shows that 
we can assume $T_{stop}\equiv \infty$ without loss 
of generality. We need this result in order to properly 
define all the random variables $T^{hi}_i$, for $i=1, \ldots,k$,
where we recall $k$ denotes the number of saddle points
of $V$ on
$\partial D$.
\begin{lemma}\label{lem:Tstop}
Consider the exit step of the idealized TAD, and modify Step 8 as follows:
\begin{itemize}
 \item[8.]{\it Go back to Step 1 of the exit step.}
\end{itemize}
Thus we loop between Step 1 and Step 8 of the exit step for infinite time, regardless of the values of 
$T_{sim}$ and $T_{stop}$. Then, $(T^{lo}_{min}, I_{min}^{lo})$ 
remains constant for all times $T_{sim} > T_{stop}$.
\end{lemma}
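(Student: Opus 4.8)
The plan is to exploit the monotonicity of the quantities updated in the exit step together with the defining property $C \le \min_{1\le i\le k}\Theta_i$. First I would record the elementary monotonicity facts: the simulation clock $T_{sim}$ is non-decreasing in the pass index (it is incremented in Step 4 by the exit time $\tau>0$ at each pass), the running minimum $T_{min}^{lo}$ (initialized to $+\infty$) is non-increasing since Step 6 replaces it by a minimum and Step 5 may skip the update entirely, and therefore $T_{stop}=T_{min}^{lo}/C$ is non-increasing as well. I would also note that after the very first pass through the exit step $T_{min}^{lo}$ is finite, because on that pass no $\partial D_i$ has yet been observed so Step 5 necessarily sets $T_i^{hi}=T_{sim}$ and Step 6 makes $T_{min}^{lo}$ finite; moreover on that pass $T_{sim}=\tau\le \tau\,\Theta_i/C=T_{stop}$, so the inequality $T_{sim}>T_{stop}$ can only be triggered at a later pass, by which point both $T_{min}^{lo}$ and $T_{stop}$ are finite.

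The core of the argument is a one-step invariance claim: if at the end of Step 7 of some pass we have $T_{sim}>T_{stop}$, then at the end of Step 7 of the next pass we still have $T_{sim}>T_{stop}$, with $T_{min}^{lo}$ and $I_{min}^{lo}$ unchanged. I would prove this by splitting on Step 5. If the new exit is through an already-observed $\partial D_j$, Step 5 jumps to Step 8 and nothing is updated while $T_{sim}$ only grows, so the claim is immediate. If the exit is through a not-yet-observed $\partial D_j$, then $T_j^{hi}=T_{sim}$ and $T_j^{lo}=T_j^{hi}\Theta_j$; using $T_{sim}>T_{stop}=T_{min}^{lo}/C$ and $\Theta_j\ge\min_{1\le i\le k}\Theta_i\ge C$ one gets
\begin{equation*}
T_j^{lo}=T_{sim}\,\Theta_j>\frac{T_{min}^{lo}}{C}\,\Theta_j\ge T_{min}^{lo},
\end{equation*}
so the minimum in Step 6 is not attained at the new value, leaving $T_{min}^{lo}$, $I_{min}^{lo}$, and hence $T_{stop}$ untouched; again $T_{sim}$ has only increased, so $T_{sim}>T_{stop}$ persists. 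Iterating this claim starting from the first pass at which $T_{sim}>T_{stop}$ holds shows that $(T_{min}^{lo},I_{min}^{lo})$ is constant on the entire suffix of passes for which $T_{sim}>T_{stop}$, which is exactly the assertion of the lemma.

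I do not expect a genuine obstacle here; the only points requiring care are bookkeeping ones. One must read the inequality $T_{sim}>T_{stop}$ with the values current at each pass, since $T_{stop}$ is itself recomputed in Step 7; the invariance claim shows that once it holds it persists (because $T_{sim}$ is non-decreasing and $T_{stop}$ is non-increasing, and in fact frozen once the invariant holds). One must also dispose of the early passes in which $T_{stop}=\infty$, where the inequality cannot yet be triggered, as noted above. The strict positivity $\tau>0$ of exit times of the diffusion started from $\nu^{hi}$, which is supported in the open set $D$, is used only to ensure $T_{sim}$ does not stall, and in any case mere non-decrease of $T_{sim}$ suffices for the statement.
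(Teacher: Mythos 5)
Your proof is correct and uses the same key inequality as the paper: the paper argues the contrapositive (if a new extrapolated time beats $T_{min}^{lo}$ then $T_i^{hi}<T_{min}^{lo}/\Theta_i\le T_{min}^{lo}/C=T_{stop}$), while you argue it directly ($T_{sim}>T_{stop}$ forces $T_j^{lo}>T_{min}^{lo}$), which is the same computation read in the other direction. You additionally spell out the monotonicity bookkeeping and the first-pass initialization, which the paper leaves implicit; that is more detail, not a different route.
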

\begin{proof}
We want to show that without ever advancing to Step 10, the exit step of idealized TAD produces 
the same random variable $(T_{min}^{lo}, I_{min}^{lo})$ as soon as $T_{sim} > T_{stop}$. To 
see this, note that if $T_i^{lo} < T_{min}^{lo}$, 
then from~\eqref{idealarrhenius},
\begin{equation*}
 T_i^{lo} =  T_i^{hi}\frac{\lambda^{hi} p_i^{hi}}{\lambda^{lo}p_i^{lo}} < T_{min}^{lo}
\end{equation*}
and so, comparing with~\eqref{stop2}, 
\begin{equation*}
 T_i^{hi} < T_{min}^{lo}\frac{\lambda^{lo} p_i^{lo}}{\lambda^{hi} p_i^{hi}}\le 
\frac{T_{min}^{lo}}{C} = T_{stop}.
\end{equation*}
Thus, if $T_{sim} > T_{stop}$, any escape event will lead to an
extrapolated time $T_i^{lo}$ which will be larger than $T_{min}^{lo}$,
and thus will not change the value of $T_{min}^{lo}$ anymore.
\end{proof}

Let us now identify the laws of the random variables 
$(T_i^{hi})_{1\le i \le l}$ produced by idealized TAD.
\begin{proposition}\label{prop:Thi}
Consider idealized TAD in the setting of Lemma~\ref{lem:Tstop}, 
so that all the $T^{hi}_i$ are defined, $i=1,2,\ldots,k$.

Let $(\tau^{(j)},I^{(j)})_{j \ge 1}$ be independent and
identically distributed random 
variables such that $\tau^{(j)}$ is independent from 
$I^{(j)}$, $\tau^{(j)} \sim {\cal E}(\lambda^{hi})$ and 
for $i=1,\ldots,k$, $I^{(j)}$ is 
a discrete random variable with law 
\begin{equation*}
 \PP(I^{(j)}=i) = p^{hi}_i.
\end{equation*}
For $i=1,\ldots,k$ define 
\begin{equation}\label{defineNT1}
N_i^{hi} = \min\{j\,:\,I^{(j)}=i\}.
\end{equation}
Then we have the following equality in law:
\begin{equation}\label{eq:T}
(T^{hi}_1, \ldots, T^{hi}_k)\sim
\left(\sum_{j=1}^{N_1^{hi}}\tau^{(j)}, \ldots,
  \sum_{j=1}^{N_k^{hi}}\tau^{(j)}\right).
\end{equation}
Moreover, (i) $T^{hi}_i \sim {\cal E}(\lambda^{hi} p^{hi}_i)$ and (ii) $T^{hi}_1,T^{hi}_2,\ldots,T^{hi}_k$ are independent.
\end{proposition}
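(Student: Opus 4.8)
The plan is to first prove the representation~\eqref{eq:T} and then deduce (i) and (ii) from it by a Poisson thinning (coloring) argument.

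To get~\eqref{eq:T} I would work in the modified exit step of Lemma~\ref{lem:Tstop} and index by $j\ge1$ the successive passes through Steps~1--8. During pass~$j$, write $\tau^{(j)}$ for the increment Step~4 adds to $T_{sim}$ and $I^{(j)}$ for the index recorded in Step~3. Each pass begins (Step~1) by resampling the high temperature process from $\nu^{hi}$ and then running it to its first exit from $D$; since every pass restarts from the \emph{same} law $\nu^{hi}$, the strong Markov property makes the pairs $(\tau^{(j)},I^{(j)})_{j\ge1}$ independent and identically distributed, with common law given by Corollary~\ref{corollary1} at $\beta=\beta^{hi}$: $\tau^{(j)}\sim{\cal E}(\lambda^{hi})$, $\PP(I^{(j)}=i)=p_i^{hi}$, and $\tau^{(j)}$ independent of $I^{(j)}$. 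These are precisely the pairs posited in the statement. Since $T_{sim}$ at the end of pass~$j$ equals $\sum_{m=1}^{j}\tau^{(m)}$, and $T_i^{hi}$ is set in Step~5 to the value of $T_{sim}$ at the end of the first pass with recorded index $i$ --- i.e.\ pass $N_i^{hi}$ of~\eqref{defineNT1} --- the vector $(T_1^{hi},\dots,T_k^{hi})$ produced by the algorithm is the measurable map $(\sum_{j\le N_1^{hi}}\tau^{(j)},\dots,\sum_{j\le N_k^{hi}}\tau^{(j)})$ of that i.i.d.\ sequence, which is~\eqref{eq:T}. (One should note $N_i^{hi}<\infty$ a.s., since $p_i^{hi}=\rho^{hi}(\partial D_i)>0$ by~\eqref{eq:rho} and the Hopf lemma $\partial_n u^{hi}<0$ on $\partial D$.)

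For (i) and (ii) I would pass to a Poisson picture: set $S_n=\sum_{j=1}^{n}\tau^{(j)}$. As the $\tau^{(j)}$ are i.i.d.\ ${\cal E}(\lambda^{hi})$, $\{S_n:n\ge1\}$ is a Poisson process on $(0,\infty)$ of intensity $\lambda^{hi}$; attaching the independent mark $I^{(n)}$ to $S_n$ makes it a marked Poisson process. By the coloring theorem, the subprocesses $\Pi_i=\{S_n:I^{(n)}=i\}$, $i=1,\dots,k$, are mutually independent Poisson processes of intensities $\lambda^{hi}p_i^{hi}$. Since $\sum_{j\le N_i^{hi}}\tau^{(j)}=S_{N_i^{hi}}$ is exactly the first point of $\Pi_i$, the variable $T_i^{hi}$ is the first-arrival time of a rate-$\lambda^{hi}p_i^{hi}$ Poisson process, hence $T_i^{hi}\sim{\cal E}(\lambda^{hi}p_i^{hi})$, and independence of the $\Pi_i$ gives independence of $T_1^{hi},\dots,T_k^{hi}$.

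I expect the only genuinely delicate point to be the i.i.d.\ claim for $(\tau^{(j)},I^{(j)})$: it relies on the exact re-equilibration at $\nu^{hi}$ at the start of every pass (idealization (I1)) together with Theorem~\ref{theorem0a} and the strong Markov property, which together erase all memory between consecutive passes. Everything afterward --- the geometric law of $N_i^{hi}$ and the Poisson coloring --- is classical; one could alternatively derive (i) alone from the one-line Laplace-transform identity for a geometric sum of i.i.d.\ exponentials, but the coloring argument yields (i) and (ii) at once.
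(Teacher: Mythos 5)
Your derivation of the representation~\eqref{eq:T} matches the paper's: both rest on (I1) and Corollary~\ref{corollary1} to argue that successive passes through the exit step generate i.i.d.\ pairs $(\tau^{(j)},I^{(j)})$, so that part is essentially identical.

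For parts (i) and (ii) your route is correct but genuinely different from the paper's. The paper proceeds by direct moment-generating-function computation: for (i), it conditions on $N_i=m$ and sums the resulting geometric series to get $\E[\exp(sT_i)]=\lambda p_i/(\lambda p_i-s)$; for (ii), it evaluates the joint MGF $\E[\exp(\sum_i s_iT_i)]$ by summing over the $k!$ possible orderings of $N_1,\dots,N_k$, which requires the dedicated combinatorial identity of Lemma~\ref{lemmasym} (a symmetric-group resummation) to collapse the sum into a product of marginal MGFs. You instead invoke the Poisson coloring theorem: $\{S_n\}$ is a rate-$\lambda^{hi}$ Poisson process, the independent marks $I^{(n)}$ thin it into mutually independent Poisson processes $\Pi_i$ of rate $\lambda^{hi}p_i^{hi}$, and $T_i^{hi}=S_{N_i^{hi}}$ is the first point of $\Pi_i$, giving (i) and (ii) simultaneously. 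Your argument is shorter and more conceptual, and avoids both the ordering decomposition and Lemma~\ref{lemmasym} entirely, at the cost of citing a nontrivial classical theorem rather than staying fully self-contained; the paper's computation is longer but elementary. Both are valid proofs. Your remark that $N_i^{hi}<\infty$ a.s.\ because $p_i^{hi}>0$ is a reasonable point to flag, though it is implicit in the paper as well.
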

\begin{proof}
The equality~\eqref{eq:T} follows from Corollary~\ref{corollary1}, 
since in the exit step of idealized TAD, the dynamics restarts from 
the QSD $\nu^{hi}$ after each escape event. 

Let us now consider the statement $(i)$.  Observe that the moment generating function of an exponential 
random variable $\tau$ with parameter $\lambda$ is: for $s < \lambda$,
\begin{equation*}
 \E\left[\exp\left(s\tau\right)\right] = \int_{0}^\infty e^{st}\lambda e^{-\lambda t}\,dt = \frac{\lambda}{\lambda-s}.
\end{equation*}
So, dropping the superscript $hi$ for ease of notation, we have:
for $i \in \{1,\ldots,k\}$, and for $s < \lambda p_i$,
\begin{align*}
\E\left[\exp\left(s T_i\right)\right] &= \sum_{m=1}^\infty\E\left[\exp\left(s T_i\right)\Big| N_i = m\right]\PP\left(N_i = m\right) \\
&= \sum_{m=1}^\infty \E\left[\exp\left(s \sum_{j=1}^{m}\tau^{(j)}\right)\right](1-p_i)^{m-1}p_i\\
&= \sum_{m=1}^\infty \E\left[\exp\left(s \tau^{(1)}\right)\right]^m(1-p_i)^{m-1}p_i\\
&= \frac{\lambda p_i}{\lambda - s}\sum_{m=1}^\infty \left(\frac{\lambda\left(1 -  p_i\right)}{\lambda - s}\right)^{m-1}\\
&= \frac{\lambda p_i}{\lambda p_i - s}.
\end{align*}
This shows $T^{hi}_i \sim {\cal E}(\lambda^{hi} p^{hi}_i)$.
\end{proof}

Before turning to the proof of the statement $(ii)$ in Proposition~\ref{prop:Thi}, we need the following 
technical lemma:
\begin{lemma}\label{lemmasym}
Let $a_1,a_2,\ldots,a_n$ be positive real numbers, and let $S_n$ be the symmetric 
group on $\{1,2,\ldots,n\}$. Then 
\begin{equation}\label{symmetric}
\sum_{\sigma \in S_n}\prod_{i=1}^n \left(\sum_{j=i}^n a_{\sigma(j)}\right)^{-1} = \prod_{i=1}^n a_i^{-1}.
\end{equation}
\end{lemma}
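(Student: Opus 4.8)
The plan is to prove the identity \eqref{symmetric} by induction on $n$. The base case $n=1$ is trivial, since both sides equal $a_1^{-1}$. For the inductive step, the key idea is to split the sum over $S_n$ according to the value $\sigma(1)$. First I would observe that for each fixed choice of $\sigma(1) = m \in \{1,\ldots,n\}$, the inner factors of the product for $i=2,\ldots,n$ involve only $\sum_{j=i}^n a_{\sigma(j)}$, which ranges over partial sums of the remaining indices $\{1,\ldots,n\}\setminus\{m\}$ in the order given by $\sigma$ restricted to $\{2,\ldots,n\}$. So grouping by $\sigma(1)=m$ gives
\begin{equation*}
\sum_{\sigma \in S_n}\prod_{i=1}^n \left(\sum_{j=i}^n a_{\sigma(j)}\right)^{-1} = \frac{1}{a_1+\cdots+a_n}\sum_{m=1}^n \sum_{\tau}\prod_{i=2}^n\left(\sum_{j=i}^n a_{\tau(j)}\right)^{-1},
\end{equation*}
where the inner sum is over all bijections $\tau$ from $\{2,\ldots,n\}$ to $\{1,\ldots,n\}\setminus\{m\}$. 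By the induction hypothesis applied to the $n-1$ numbers $\{a_\ell : \ell \neq m\}$, the inner sum equals $\prod_{\ell\neq m} a_\ell^{-1}$.

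Substituting this back, I would be left with showing
\begin{equation*}
\frac{1}{a_1+\cdots+a_n}\sum_{m=1}^n \prod_{\ell\neq m} a_\ell^{-1} = \prod_{i=1}^n a_i^{-1},
\end{equation*}
which, after multiplying both sides by $\prod_{i=1}^n a_i$, reduces to the elementary identity $\sum_{m=1}^n a_m = a_1+\cdots+a_n$. This completes the induction.

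The only mild subtlety — and the step I would be most careful about — is the bookkeeping in the reindexing: one must check that as $\sigma$ ranges over all permutations of $S_n$ with $\sigma(1)$ fixed to $m$, the restriction of $\sigma$ to $\{2,\ldots,n\}$ ranges over exactly all bijections onto $\{1,\ldots,n\}\setminus\{m\}$, and that the product $\prod_{i=2}^n (\sum_{j=i}^n a_{\sigma(j)})^{-1}$ really is the left-hand side of \eqref{symmetric} for that smaller set of numbers (with its own indexing). This is a routine relabeling, so no real obstacle arises; the identity is essentially a telescoping/partial-fraction fact dressed up combinatorially. An alternative, which I would mention as a remark, is to interpret the left side probabilistically: $\prod_{i=1}^n(\sum_{j=i}^n a_{\sigma(j)})^{-1}$ is, up to the factor $\prod a_i$, the probability that $n$ independent exponential clocks with rates $a_{\sigma(1)},\ldots,a_{\sigma(n)}$ ring in the order $\sigma(1),\sigma(2),\ldots,\sigma(n)$ — wait, more precisely $\mathbb{P}(\text{order is }\sigma) = \prod_{i=1}^n a_{\sigma(i)}/(\sum_{j=i}^n a_{\sigma(j)})$ — and summing over all orderings $\sigma$ gives $1$, which is exactly \eqref{symmetric} after dividing by $\prod a_i$. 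Either route works; I would present the induction as the primary proof since it is self-contained.
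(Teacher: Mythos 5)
Your induction on $n$, splitting the sum over $\sigma \in S_n$ according to $\sigma(1)=m$, factoring out $(a_1+\cdots+a_n)^{-1}$ and applying the inductive hypothesis to each residual sum over bijections of $\{2,\ldots,n\}$ onto $\{1,\ldots,n\}\setminus\{m\}$, is precisely the paper's argument. The probabilistic aside (the identity is the statement that the ringing order of $n$ independent exponential clocks is some permutation with probability one) is a nice sanity check but not used by the paper.
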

\begin{proof} Note that~\eqref{symmetric} is of course true for $n=1$. Assume it 
is true for $n-1$, and let 
\begin{equation*}
S_n^{(k)} = \{\sigma \in S_n\,:\, \sigma(1) = k\}.
\end{equation*}
Then
\begin{align*}
\sum_{\sigma \in S_n} \prod_{i=1}^n \left(\sum_{j=i}^n a_{\sigma(j)}\right)^{-1} &= 
\left(\sum_{i=1}^n a_i\right)^{-1}\sum_{\sigma \in S_n}\prod_{i=2}^n \left(\sum_{j=i}^n a_{\sigma(j)}\right)^{-1} \\
&=\left(\sum_{i=1}^n a_i\right)^{-1}\sum_{k=1}^n \sum_{\sigma \in S_n^{(k)}}\prod_{i=2}^n\left(\sum_{j=i}^n a_{\sigma(j)}\right)^{-1}\\
&=\left(\sum_{i=1}^n a_i\right)^{-1}\sum_{k=1}^n \prod_{\substack{j=1\\j\ne k}}^n a_j^{-1} \\
&= \prod_{i=1}^n a_i^{-1}.
\end{align*}
By induction~\eqref{symmetric} is valid for all $n$.
\end{proof}

We are now in position to prove statement $(ii)$ of
Proposition~\ref{prop:Thi}.
\begin{proof}[Proof of Proposition~\ref{prop:Thi} part $(ii)$]
In this proof, we drop the superscript $hi$ for ease of notation. To show that the $T_i$'s are independent, it suffices to show that for 
$s_1,\ldots,s_k$ in a neighborhood of zero we have 
\begin{equation}\label{toshow}
\E\left[\exp\left(\sum_{i=1}^k s_i T_i\right)\right] = \prod_{i=1}^k \E\left[\exp\left(s_i T_i\right)\right].
\end{equation}
We saw in the proof of part $(i)$ that: for $s_i < \lambda p_i$,
\begin{equation}\label{factors}
 \E\left[\exp\left(s_i T_i\right)\right] = \frac{\lambda p_i}{\lambda p_i -s_i}.
\end{equation}
Consider then the left-hand-side of~\eqref{toshow}. We start by a
preliminary computation. Let $m_0 = 0$, $m_1 = 1$, 
and $s_i < \lambda p_i$ for $i=1,\ldots,k$. Then 
\begin{align}
&\sum_{1<m_2<m_3\ldots<m_k} \E\left[\exp\left(\sum_{i=1}^k s_i T_i\right)\Big| \cap_{i=1}^k \{N_i = m_i\}\right]\PP\left(\cap_{i=1}^k \{N_i = m_i\}\right) \nonumber \\
&=\sum_{1<m_2<m_3\ldots<m_k} \E\left[\exp\left(\sum_{i=1}^k \left(s_i \sum_{j=1}^{m_i}\tau^{(j)}\right)\right)\right]
p_1 \prod_{i=2}^k p_i\left(1-\sum_{j=i}^k p_j\right)^{m_i-m_{i-1}-1}\nonumber\\
&= p_1\sum_{1<m_2<m_3\ldots<m_k}\,\prod_{i=1}^k  \E\left[\exp\left(\left(\sum_{j=i}^k s_j\right)\sum_{j=m_{i-1}+1}^{m_i}\tau^{(j)}\right)\right]\prod_{i=2}^k p_i\left(1-\sum_{j=i}^k p_j\right)^{m_i-m_{i-1}-1} \nonumber\\
&= p_1 \sum_{1<m_2<m_3\ldots<m_k}\,\prod_{i=1}^k \E\left[\exp\left(\tau^{(1)}\sum_{j=i}^k s_j\right)\right]^{m_i-m_{i-1}}\prod_{i=2}^k p_i\left(1-\sum_{j=i}^k p_j\right)^{m_i-m_{i-1}-1} \nonumber\\
&=\left(\frac{\lambda p_1}{\lambda - \sum_{j=1}^k s_j}\right) \sum_{1<m_2<m_3\ldots<m_k}\,\prod_{i=2}^k p_i \left(\frac{\lambda}{\lambda - \sum_{j=i}^k s_j}\right)\left(\frac{\lambda\left(1-\sum_{j=i}^k p_j\right)}{\lambda - \sum_{j=i}^k s_j}\right)^{m_i-m_{i-1}-1}\nonumber\\
&= \left(\frac{\lambda p_1}{\lambda - \sum_{j=1}^k s_j}\right)\prod_{i=2}^k p_i \left(\frac{\lambda}{\lambda - \sum_{j=i}^k s_j}\right) 
\left(1-\frac{\lambda\left(1 - \sum_{j=i}^k p_j\right)}{\lambda - \sum_{j=i}^k s_j}\right)^{-1} \nonumber\\
&= \left(\frac{\lambda p_1}{\lambda - \sum_{j=1}^k s_j}\right)\prod_{i=2}^k \lambda p_i \left(\sum_{j=i}^k \lambda p_j - s_j\right)^{-1}\nonumber\\
&= \prod_{i=1}^k \lambda p_i \left( \sum_{j=i}^k \lambda p_j - s_j\right)^{-1}. \nonumber\\
\label{long}
\end{align}
From~\eqref{long} observe that
\begin{align}\begin{split}\label{assume}
\E\left[\exp\left(\sum_{i=1}^k s_i T_i\right)\right] 
&= \sum_{\sigma \in S_k}\prod_{i=1}^k \lambda p_{\sigma(i)} \left( \sum_{j=i}^k \lambda p_{\sigma(j)} - s_{\sigma(j)}\right)^{-1}\\
&= {\left(\prod_{i=1}^k \lambda p_i\right)} \sum_{\sigma \in S_k}\prod_{i=1}^k \left( \sum_{j=i}^k \lambda p_{\sigma(j)} - s_{\sigma(j)}\right)^{-1}\\
&= \prod_{i=1}^k \frac{\lambda p_i}{\lambda p_{i} - s_{i}},
\end{split}
\end{align}
where in the last step we have used Lemma~\ref{lemmasym}. Comparing 
~\eqref{toshow} with~\eqref{factors} and~\eqref{assume}, we are done.
\end{proof}

To complete the proof of Theorem~\ref{theorem1}, we finally need the
following Lemma.
\begin{lemma}\label{lemma2}
Let $T_1,\ldots,T_k$ be independent random variables such that $T_i
\sim {\cal E}(\lambda p_i)$, with $\lambda >0$, $p_i > 0$ and
$\sum_{j=1}^k p_i =1$. Set 
\begin{equation*}
T = \min_i T_i \hskip10pt\hbox{ and }\hskip10pt I = \arg\min_i\, T_i.
\end{equation*}
Then: (i) $T \sim {\cal E}(\lambda)$, (ii) $\PP(I = i) = p_i$, and (iii) $T$ and $I$ are independent. 
\end{lemma}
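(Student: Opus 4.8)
The plan is to compute directly the joint law of the pair $(T,I)$ through the single family of quantities $\PP(I=i,\,T>t)$, indexed by $i\in\{1,\ldots,k\}$ and $t\ge 0$, and then to read off all three assertions from the resulting closed formula. First I would note that, since each $T_j$ has a density, the ties $\{T_j=T_\ell\}$ ($j\ne\ell$) have probability zero, so $I$ is almost surely well defined and on the event $\{I=i\}$ one has $T=T_i$. Consequently, up to a null set,
\begin{equation*}
\{I=i,\,T>t\}=\{T_i>t\}\cap\bigcap_{j\ne i}\{T_j>T_i\}.
\end{equation*}

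Next I would condition on the value of $T_i$ and use independence of the $T_j$ together with $\PP(T_j>s)=e^{-\lambda p_j s}$:
\begin{equation*}
\PP(I=i,\,T>t)=\int_t^\infty \lambda p_i\,e^{-\lambda p_i s}\prod_{j\ne i}\PP(T_j>s)\,ds=\int_t^\infty \lambda p_i\,e^{-\lambda p_i s}\prod_{j\ne i}e^{-\lambda p_j s}\,ds.
\end{equation*}
Because $\sum_{j=1}^k p_j=1$, the product collapses, $\prod_{j=1}^k e^{-\lambda p_j s}=e^{-\lambda s}$, and therefore
\begin{equation*}
\PP(I=i,\,T>t)=\int_t^\infty \lambda p_i\,e^{-\lambda s}\,ds=p_i\,e^{-\lambda t}.
\end{equation*}

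The three conclusions then follow by inspection of this identity. Taking $t=0$ yields $\PP(I=i)=p_i$, which is (ii). Summing over $i=1,\ldots,k$ yields $\PP(T>t)=\sum_{i=1}^k p_i\,e^{-\lambda t}=e^{-\lambda t}$, so $T\sim{\cal E}(\lambda)$, which is (i). Finally, combining these, $\PP(I=i,\,T>t)=p_i\,e^{-\lambda t}=\PP(I=i)\,\PP(T>t)$ for every $i$ and every $t\ge 0$, which is precisely the independence of $T$ and $I$ claimed in (iii).

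This argument is entirely elementary and I do not anticipate a genuine obstacle; the only points warranting a line of care are the observation that ties occur with probability zero (so that $I$ and the event decomposition above make sense) and the justification of the conditioning step, which is legitimate since the integrand is non-negative and the $T_j$ are independent with explicit exponential densities.
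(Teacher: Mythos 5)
Your proof is correct and takes essentially the same route as the paper: both compute the joint quantity $\PP(I=i,\,T>t)$ and read off all three conclusions, with the only cosmetic difference being that the paper first notes $\min_{j\ne i}T_j\sim{\cal E}(\lambda(1-p_i))$ and writes a double integral, while you multiply the individual survival functions $\prod_{j\ne i}e^{-\lambda p_j s}$ directly — these are the same computation after performing the inner integral. Your derivation of (i) by summing over $i$ is a slight streamlining of the paper, which instead cites the standard minimum-of-exponentials fact for (i) separately.
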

\begin{proof}
Since the $T_i$'s are assumed to be independent, it is well known that $T = T_I = \min_i T_i$ is an exponential random variable with 
parameter $\sum_i \lambda p_i = \lambda$. This proves $(i)$. 
Turning to $(ii)$ and $(iii)$, note that $\min_{j\ne i}T_j$ is an exponential 
random variable independent of $T_i$ with parameter 
\begin{equation*}
\sum_{j\ne i} \lambda p_j = \lambda (1-p_i).
\end{equation*}
Thus, 
\begin{align}\begin{split}\label{iiandiii}
 \PP(I = i,  T_{I} \ge t) &= \PP(t \le T_i \le \min_{j\ne i}T_j)\\
&=\int_t^\infty \int_s^\infty \lambda p_i e^{-\lambda p_i s} \, \lambda(1-p_i) e^{-\lambda (1-p_i)r}
\,dr\,ds  \\
&= \int_t^\infty \lambda p_i e^{-\lambda s}\,ds \\
&= p_i \PP(T_I \ge t).
\end{split}
\end{align}
Setting $t=0$ we obtain $\PP(I = i) = p_i$, which proves $(ii)$. Now 
$(iii)$ follows from~\eqref{iiandiii}.
\end{proof}

We are now in position to prove Theorem~\ref{theorem1}.
\begin{proof}[Proof of Theorem~\ref{theorem1}.] 
First, by Lemma~\ref{lem:Tstop}, we can assume that $T_{stop} =
\infty$ so that all the $T_i^{hi}$'s are well defined, for $i=1,
\ldots, k$. Then Proposition~\ref{prop:Thi} implies that the
$T_i^{hi}$'s are independent exponential random variables with parameters 
$\lambda^{hi} p_i^{hi}$. So by~\eqref{idealarrhenius}, the $T_i^{lo}$'s are 
independent exponential random variables with parameters $\lambda^{lo} p_i^{lo}$. 
Now by applying Lemma~\ref{lemma2} to the $T_i^{lo}$'s, we get $T_{min}^{lo} \sim {\cal E}(\lambda^{lo})$, 
$\PP(I_{min}^{lo} = i) = p_i^{lo}$, and $T_{min}^{lo}$ is independent of $I_{min}^{lo}$. Referring to Corollary~\ref{corollary1}, we are done. 
\end{proof}

\begin{remark}
Observe that the proof of Theorem~\ref{theorem1} does not use (I2), 
which is needed only to obtain correct metastable dynamics by iterating the exit step. 
Also, notice that we did not use the fact that $D$ is the basin of 
attraction of a local minimum of $V$ or that each set $\partial D_i$
in the partition of $\partial D$ is associated to a saddle point $x_i$ for the moment. 
The latter assumption is crucial in the next section, in which we obtain 
computable estimates of the ratios $\Theta_i$, $i=1,\ldots,k$;
this will also require an assumption of large $\beta$ which was not needed 
for Theorem~\ref{theorem1}.
\end{remark}

\section{Estimates for the $\Theta_i$'s at low temperature in one dimension}\label{sec:theta}

In the last section we showed that modified TAD (Algorithm~\ref{alg2}) 
is {\it exact} with the idealizations (I1)-(I3) and the assumption (A1); 
see idealized TAD (Algorithm~\ref{alg3}). In this section we justify (I3). 
In particular, we show in Theorem~\ref{theorem2} below how the ratios 
$\Theta_i$ (see~\eqref{ratios}) can be approximated by explicit practical 
formulas in one dimension. Compared to Theorem~\ref{theorem1}, the proof of 
Theorem~\ref{theorem2} will require the additional assumption that temperature 
is sufficiently small.

\subsection{Statement of the main result}
We recall that the ratios $\Theta_i$, $i=1,\ldots,k$ are unknown in
practice. In TAD these ratios are approximated using
the Arrhenius law. The main result of this section, Theorem~\ref{theorem2}, gives precise 
asymptotics for $\Theta_i$ as $\beta^{hi}, \beta^{lo} \to \infty$. 
In particular, we show that $\Theta_i$ converges to 
$\exp[-(\beta^{hi}-\beta^{lo})(V(x_i)-V(x_0))]$.

Throughout this section we assume that we are {\em in a one dimensional
setting}. Moreover, we assume that $D$ is the basin of attraction of the gradient
dynamics $dy/dt = -V'(y)$ associated to a local minimum
of $V$ (this is what is done in practice by
A.F. Voter and co-workers). Finally, the potential $V$ is assumed to be
a Morse function, which means that the critical points of $V$ are non-degenerate.
Under these assumptions, we may assume
without additional loss of generality that (see Figure~\ref{fig2}):
\begin{itemize}
 \item[(B1)]{$D = (0,b)$, with $b>1$, $V(0)=0$, and $V'(x) \ne 0$ for $x \notin \{0,1,b\}$,}

\item[(B2)]{$V'(0) = 0 = V'(b)$ and $V''(0)<0$, $V''(b)<0$,}

\item[(B3)]{$V'(1) = 0$ and $V''(1)>0$.}
\end{itemize}
We also normalize $u$ (see~\eqref{eq:ulambda}) so that 
\begin{itemize}
\item[(B4)] $u(1) = 1$.
\end{itemize}
In particular, the location of the minimum of $V$ and the value 
of $V$ at $0$ are chosen for notational convenience and without loss of 
generality. In the following, we write $\{0\} = \partial D_1$ and $\{b\} = \partial D_2$. 

\begin{figure}
\begin{center}
\includegraphics[scale=0.5]{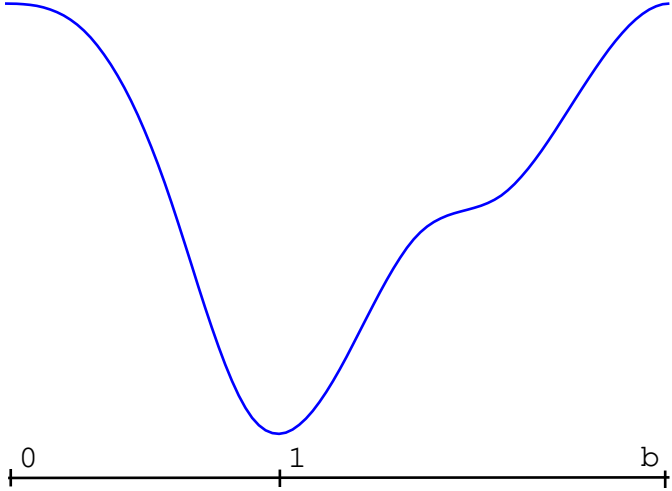}
\end{center}
\caption{A function $V:D \to {\mathbb R}$ satisfying
  (B1)-(B3).}
\label{fig2}
\end{figure}

We will prove the following:
\begin{theorem}\label{theorem2}
Under the assumptions stated above, we
have the formula: for $i=1,2$,
\begin{equation}\label{main}
\Theta_i=\frac{\lambda^{hi}p_i^{hi}}{\lambda^{lo}p_i^{lo}} = 
e^{-(\beta^{hi}-\beta^{lo})(V(x_i)-V(x_0))}\left(1 + O\left(\frac{1}{\beta^{hi}}- \frac{1}{\beta^{lo}}\right)\right)
\end{equation}
as $\beta^{hi},\beta^{lo} \to \infty$, $\beta^{lo}/\beta^{hi}=r$ 
where $x_1 = 0$, $x_2 = b$ and $x_0 = 1$, and $r>0$ is constant.
\end{theorem}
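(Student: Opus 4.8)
The plan is to compute sharp low-temperature asymptotics for each of the four quantities $\lambda^{hi}$, $\lambda^{lo}$, $p_i^{hi}$, $p_i^{lo}$ separately and then combine them. Since we are in one dimension with $D=(0,b)$, the eigenproblem~\eqref{eq:ulambda} is an ODE: $-V'u' + \beta^{-1}u'' = -\lambda u$ on $(0,b)$ with $u(0)=u(b)=0$. The key observation is that one can make the substitution that symmetrizes $L$ in $L^2(e^{-\beta V}dx)$, turning the problem into a Schrödinger-type operator; the principal eigenvalue $\lambda$ is then governed by an Eyring--Kramers formula. Specifically, in this one-well one-dimensional setting (a well at $x_0=1$ with two saddles at the endpoints $0$ and $b$), the known sharp asymptotics give
\begin{equation*}
\lambda \sim \frac{\sqrt{V''(1)}}{2\pi}\left(\sqrt{|V''(0)|}\,e^{-\beta(V(0)-V(1))} + \sqrt{|V''(b)|}\,e^{-\beta(V(b)-V(1))}\right)
\end{equation*}
as $\beta\to\infty$, with a multiplicative error $1+O(1/\beta)$. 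I would cite \cite{Berglund, Bovier} for this, or else derive it directly here since the 1D case is elementary: one uses the explicit quasi-stationary profile, integrates the flux identity, and applies Laplace's method at the well and at each saddle.

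Next I would obtain the asymptotics of the exit probabilities $p_i=\rho(\partial D_i)$. From Theorem~\ref{theorem0b}, in one dimension $\rho$ is supported on $\{0,b\}$ with $p_1 \propto -u'(0)e^{-\beta V(0)}/(\beta\lambda Z)$ and $p_2 \propto u'(b)e^{-\beta V(b)}/(\beta\lambda Z)$ where $Z=\int_D u e^{-\beta V}$. The ratio $p_1/p_2$ thus equals $(-u'(0)e^{-\beta V(0)})/(u'(b)e^{-\beta V(b)})$, and combined with $p_1+p_2=1$ one solves for each $p_i$. The boundary derivatives $u'(0)$, $u'(b)$ of the normalized eigenfunction ($u(1)=1$) again have sharp asymptotics: $u$ is exponentially close to $1$ throughout the well and decays near each endpoint on the scale dictated by the quadratic approximation of $V$, so that $-u'(0) \sim \beta\lambda Z \,e^{\beta V(0)} \cdot (\text{const})$; more usefully, one finds directly that
\begin{equation*}
p_i \sim \frac{\sqrt{|V''(x_i)|}\,e^{-\beta(V(x_i)-V(1))}}{\sqrt{|V''(0)|}\,e^{-\beta V(0)} + \sqrt{|V''(b)|}\,e^{-\beta V(b)}}\,e^{\beta V(1)}\cdot(1+O(1/\beta)),
\end{equation*}
i.e. $p_i$ has the same Arrhenius prefactor structure, so that the product $\lambda p_i$ satisfies
\begin{equation*}
\lambda p_i = \frac{\sqrt{V''(1)}\sqrt{|V''(x_i)|}}{2\pi}\,e^{-\beta(V(x_i)-V(1))}\left(1+O(1/\beta)\right).
\end{equation*}

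Finally I would form the ratio $\Theta_i = (\lambda^{hi}p_i^{hi})/(\lambda^{lo}p_i^{lo})$. The temperature-independent prefactor $\sqrt{V''(1)}\sqrt{|V''(x_i)|}/(2\pi)$ cancels exactly, leaving
\begin{equation*}
\Theta_i = e^{-(\beta^{hi}-\beta^{lo})(V(x_i)-V(1))}\,\frac{1+O(1/\beta^{hi})}{1+O(1/\beta^{lo})},
\end{equation*}
and since $(1+O(1/\beta^{hi}))/(1+O(1/\beta^{lo})) = 1 + O(1/\beta^{hi}) + O(1/\beta^{lo}) = 1 + O(1/\beta^{hi}-1/\beta^{lo})$ — here one uses that $\beta^{lo}/\beta^{hi}=r$ is a fixed constant, so $1/\beta^{lo}$ and $1/\beta^{hi}$ are comparable and the larger of the two error terms is $O(1/\beta^{hi}-1/\beta^{lo})$ up to the constant $r$ — this gives~\eqref{main}.

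The main obstacle I expect is establishing the error term at the sharp order $O(1/\beta)$ (rather than merely $o(1)$ in the exponent) uniformly as $\beta\to\infty$: this requires controlling the next-order term in the Laplace expansions at the well and at the saddles, and, more delicately, controlling the eigenfunction $u$ and its boundary normal derivatives to relative precision $1+O(1/\beta)$. In one dimension this is tractable — one can get a very explicit handle on $u$ by writing the first-order ODE for $u'e^{-\beta V}$ and integrating — but it is the step where the Morse (non-degeneracy) hypothesis (B2)--(B3) is essential and where the bulk of the technical work lies. The cancellation of prefactors in the final ratio is the conceptually important point and is the reason the idealized quantity $\Theta_i$ reproduces the Arrhenius extrapolation factor; everything else is careful asymptotics.
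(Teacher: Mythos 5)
Your plan is correct in outline and the final cancellation-of-prefactors argument is the right conceptual point, but your route is genuinely different from the paper's, and the difference is not cosmetic.

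The paper never estimates $\lambda$ and $p_i$ separately. From the explicit exit-point formula of Theorem~\ref{theorem0b} (in one dimension, using $u(0)=0$ and $V(0)=0$) one gets
$\lambda\, p_1 = u'(0)\big/\bigl(\beta\int_D u\,e^{-\beta V}\,dx\bigr)$,
i.e.\ the $\lambda$ in the denominator of $\rho$ cancels against the $\lambda$ in the product $\lambda p_i$. This is the crucial observation: the ratio $\Theta_i$ reduces to
\begin{equation*}
\Theta_1 = e^{-(\beta^{hi}-\beta^{lo})V(0)}\cdot\frac{\beta^{lo}}{\beta^{hi}}\cdot
\frac{u^{hi\,\prime}(0)}{u^{lo\,\prime}(0)}\cdot
\frac{\int_D u^{lo}e^{-\beta^{lo}V}\,dx}{\int_D u^{hi}e^{-\beta^{hi}V}\,dx},
\end{equation*}
which involves only the boundary derivative of $u$ and the normalization integral. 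Consequently, the only fact the paper needs about $\lambda$ is a crude exponential upper bound $\lambda=O(e^{-c\beta})$ (Lemma~\ref{lemma4}), together with $\|u\|_\infty=O(1)$ (Lemma~\ref{lemma5}); with these in hand one compares $u$ to the explicit solution $f(x)=\int_0^x e^{\beta V}/\int_0^1 e^{\beta V}$ of the $\lambda=0$ problem (Lemma~\ref{lemma6}), gets $\|u-f\|_\infty, \|u'-f'\|_\infty = O(e^{-c\beta})$, and then does two Laplace expansions (one at $x_i$ on $\partial D$, one at the well $x_0=1$) on entirely explicit integrals. The Morse hypothesis enters only through these Laplace expansions.

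Your route instead asks for the full Eyring--Kramers asymptotics of $\lambda$ (and of each $p_i$) with a relative error $1+O(1/\beta)$, and then multiplies. That is a correct strategy, and your heuristic for $\lambda p_i$ is right, but it front-loads a harder estimate than the paper actually needs. Two concrete warnings. First, the standard references you cite treat interior non-degenerate saddles; here the ``saddles'' $x_1=0$, $x_2=b$ are boundary critical points where $V'$ vanishes on $\partial D$ (characteristic boundary, hypotheses (B2)), so the precise prefactor needs its own justification in this setting -- the paper flags this explicitly and adapts the classical results. If you want to ``derive it directly,'' you effectively end up re-doing the paper's comparison of $u$ to $f$, at which point you might as well skip the separate $\lambda$ asymptotics. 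Second, your step
$\bigl(1+O(1/\beta^{hi})\bigr)\big/\bigl(1+O(1/\beta^{lo})\bigr)=1+O(1/\beta^{hi}-1/\beta^{lo})$
is true here only because $r=\beta^{lo}/\beta^{hi}>1$ is a \emph{fixed} constant, so $1/\beta^{hi}-1/\beta^{lo}$ is comparable to $1/\beta^{hi}$; if one wanted the statement uniformly as $r\to1$ one would have to track that the $O(1/\beta)$ coefficients are \emph{identical} $\beta$-independent constants $k_1,k_2$ at both temperatures (this is what the paper's Laplace expansions deliver, and it is not automatic from merely quoting an Eyring--Kramers formula with an unstructured $O(1/\beta)$ error). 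Under the stated hypothesis $r$ fixed your weaker argument suffices, but it is worth being aware that the paper's proof is tighter than the theorem's statement requires.
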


The ratios $\frac{\lambda^{hi}p_i^{hi}}{\lambda^{lo}p_i^{lo}}$ involve integrals of the form $\int_D e^{-\beta V(x)} u(x)\,dx$ 
at high and low temperature. We will use Laplace expansions to 
analyze the integrals, but since $u$ depends on $\beta$, extra care must 
be taken in the analysis. 

\subsection{Proof of Theorem~\ref{theorem2}}

In all what follows, $(u,-\lambda)$ denotes the principal 
eigenvector/eigenvalue pair of $L$ with homogeneous
Dirichlet boundary conditions; see~\eqref{eq:ulambda}. We are
interested in how the pair $(u,-\lambda)$ varies in the small
temperature regime $\beta \to \infty$.

Throughout this section, we write $c$ to denote a {\it positive} constant, 
the value of which may change without being explicitly noted.
To begin, we will need some asymptotics for $\lambda$ and $u$, Lemma~\ref{lemma4} and 
Lemma~\ref{lemma5} below. The contents of both lemmas are found in or 
implied by \cite{Day2}, 
\cite{Devinatz}, and \cite{Friedman} (see also \cite{Ofer} and \cite{Day}) in the case where 
$V' \cdot n >0$ on $\partial D$, with $n$ the normal to $\partial D$
(in our setting $n=1$ on $\partial D_2$ and $n=-1$ on $\partial D_1$). Here, we 
consider the case of {\it characteristic boundary}, where 
from (B2) $V' \cdot n = 0$ on $\partial D$, so we adapt 
the classical results to this case.

\begin{lemma}\label{lemma4}
There exists $c>0$ such that 
\begin{equation}\label{lambda}
\lambda = O\left(e^{-c\beta}\right)\hskip20pt\hbox{as }\beta \to \infty.
\end{equation}
\end{lemma}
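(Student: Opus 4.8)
The plan is to obtain the exponential smallness of $\lambda$ from the variational (Rayleigh quotient) characterization of the principal Dirichlet eigenvalue, using a well-chosen trial function concentrated near the well $x_0=1$. Recall that $L = -V'\partial_x + \beta^{-1}\partial_{xx}$ is self-adjoint with respect to the weighted measure $e^{-\beta V}\,dx$, so that
\begin{equation*}
\lambda = \inf_{\substack{\varphi \in H^1_0(D)\\ \varphi \ne 0}} \frac{\beta^{-1}\int_D (\varphi')^2 e^{-\beta V}\,dx}{\int_D \varphi^2 e^{-\beta V}\,dx}.
\end{equation*}
First I would fix a small $\eta>0$ with $[1-2\eta,1+2\eta]\subset D$ and take $\varphi$ to be a fixed smooth cutoff equal to $1$ on $[1-\eta,1+\eta]$, supported in $[1-2\eta,1+2\eta]$, independent of $\beta$. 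Then the denominator is bounded below by $\int_{1-\eta}^{1+\eta} e^{-\beta V}\,dx$, which by a Laplace estimate is of order $\beta^{-1/2} e^{-\beta V(1)}$ up to constants, and the numerator is bounded above by $\beta^{-1}\|\varphi'\|_\infty^2 \int_{\{\eta \le |x-1|\le 2\eta\}} e^{-\beta V}\,dx$. On that annular region $V \ge V(1) + m$ for some $m = m(\eta)>0$ (since $1$ is the unique minimum of $V$ in $D$ by (B1)--(B3) and the region is bounded away from $1$), so the numerator is at most $C\beta^{-1} e^{-\beta(V(1)+m)}$. Dividing, $\lambda \le C\beta^{-1/2} e^{-\beta m} \le C e^{-\beta m/2}$ for $\beta$ large, which gives \eqref{lambda} with any $c < m/2$.

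One point to check carefully is that the minimum of $V$ over $D$ is attained only at $x_0 = 1$: by (B1) the only interior critical point is $1$, by (B3) it is a local min, and by (B1) $V'$ does not vanish elsewhere in $(0,b)$, so $V$ is strictly monotone on $(0,1)$ and on $(1,b)$; combined with (B2) ($V''(0), V''(b) < 0$, so $0$ and $b$ are local maxima of $V$ along $\partial D$ from inside), $V$ is minimized on $\overline{D}$ precisely at $1$, and the gap $m$ used above is genuinely positive. This is the only place the geometry of Figure~\ref{fig2} enters; everything else is the standard Laplace/Rayleigh argument.

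The main obstacle — really the only subtlety — is that this is the \emph{characteristic boundary} case ($V'\cdot n = 0$ on $\partial D$, from (B2)), so the classical references cited (\cite{Day2,Devinatz,Friedman}) do not apply verbatim; their trial-function constructions are tuned to the non-characteristic case $V'\cdot n > 0$. However, the upper bound above does not use the boundary behaviour at all — it only uses that the trial function has compact support inside $D$ and that $V$ has a strict interior minimum — so it goes through unchanged. (The characteristic nature of the boundary will matter for the \emph{sharp} prefactor asymptotics of $\lambda$ and for the normal-derivative estimates feeding into $p_i$, i.e. in the later lemmas, not here.) A matching lower bound $\lambda \ge c' e^{-\beta M}$ for some $M$ would require more work, but the lemma only asserts the upper bound $\lambda = O(e^{-c\beta})$, so I would stop here.
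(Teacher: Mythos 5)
Your proof is correct, and it takes a genuinely different route from the paper's. The paper proves the upper bound indirectly: it first passes to a compactly contained subdomain $D'\subset\subset D$ on which $V'\cdot n>0$ (so that the boundary is non-characteristic), uses domain monotonicity of the principal Dirichlet eigenvalue to get $\lambda\le\lambda'$, and then invokes the classical exit-time / large-deviation results (Day, Friedman, Devinatz) relating $\E[\tau']$ and $1/\lambda'$ to the quasi-potential, finally bounding the quasi-potential below by $\inf_{z\in\partial D'}(V(z)-V(1))>0$. Your argument skips all of this machinery and works directly with the Rayleigh quotient on $D$ itself, using a fixed $\beta$-independent bump function near the well and two Laplace estimates. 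This is more elementary and self-contained; in particular you correctly observe that the characteristic-boundary issue (which is the reason the paper detours through $D'$) is irrelevant for a one-sided trial-function bound because the trial function is compactly supported away from $\partial D$. What the paper's route buys is that it sets up the exit-time/large-deviation framework that is conceptually aligned with the Arrhenius picture and reused in the surrounding discussion, but for Lemma~\ref{lemma4} itself your direct Rayleigh-quotient computation is simpler and gives the same conclusion (indeed with the explicit rate $c<\min_{\eta\le|x-1|\le 2\eta}V-V(1)$). One cosmetic remark: your final inequality $\beta^{-1/2}e^{-\beta m}\le Ce^{-\beta m/2}$ is correct but wasteful; since $\beta^{-1/2}\le 1$ for $\beta\ge 1$ you may take $c=m$ directly.
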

\begin{proof}
Let $D' \subset D$ be a domain containing $1$ such that ${\overline {D'}} \subset D$, and 
let $(u',-\lambda')$ the principal eigenvector/eigenvalue pair 
for $L$ on $D'$ with homogeneous Dirichlet boundary conditions on $\partial D'$. 
Recall that $\lambda$ is given by the Rayleigh formula
\begin{equation*}
 \lambda = \inf_{f \in H_V^1(D)}\frac{\beta^{-1}\int_D |\nabla f(x)|^2\, e^{-\beta V(x)}\,dx}
{\int_D f(x)^2 \,e^{-\beta V(x)}\,dx},
\end{equation*}
where $H_V^1(D)$ is the space of functions vanishing on ${\mathbb R}\setminus D$ such that 
\begin{equation*}
\int_D \left(|\nabla f(x)|^2 + f(x)^2\right)e^{-\beta V(x)}\,dx < \infty,
\end{equation*} 
and similarly for $\lambda'$. 
Since every function vanishing on ${\mathbb R}\setminus D'$ also vanishes 
on ${\mathbb R} \setminus D$, we have 
\begin{equation}\label{lambdaprime}
 \lambda \le \lambda'.
\end{equation}
Now let $X_t^1$ obey~\eqref{1} with $X_0^1 = 1$, and define $\tau' = \inf\{t>0\,:\, X_t^1 \notin D'\}$. 
Since $D'$ is a sub-basin of attraction such that $V'$ points outward on $\partial D'$, 
we can use the following classical results (see e.g. Lemmas 3--4 of~\cite{Day2}):
\begin{equation}\label{taulambda}
\lim_{\beta \to \infty} \beta^{-1}\log \E[1/\lambda']=\lim_{\beta \to \infty} \beta^{-1} \log \E[\tau'] = \inf_{z \in \partial D'}\inf_{t>0}\, I_{z,t}
\end{equation}
where, by definition,
\begin{align*}
&I_{z,t} = \inf_{f \in H_1^z[0,t]} \frac{1}{4}\int_0^t |{\dot f}(s)+V'(f(s))|^2\,ds\\
&H_1^z[0,t] = \left\{f\,:\, \exists {\dot f} \in L^2[0,t]\,\,s.t.\,\,f(t) = z,\,\forall s \in [0,t],
\,f(s) = 1 + \int_0^s {\dot f}(r)\,dr\right\}.
\end{align*}
Observe that for any $t>0$ and $f \in H_1^z[0,t]$ we have 
\begin{align*}
&\frac{1}{4}\int_0^t \left|{\dot f}(s) + V'(f(s))\right|^2\,ds \\
&= \frac{1}{4} \int_0^t \left|{\dot f}(s) - V'(f(s))\right|^2\,ds + \int_0^t {\dot f}(s) V'(f(s))\,ds\\
&\ge V(z)-V(1).
\end{align*}
Since $\partial D'$ is disjoint from $1$ we can conclude that for $z \in \partial D'$,  
$I_{z,t} \ge c > 0$ uniformly in $t>0$, for a positive constant $c$. Thus, 
\begin{equation*}
\lim_{\beta \to \infty} \beta^{-1} \log \E[\tau'] \ge c > 0
\end{equation*}
which, combined with~\eqref{lambdaprime} and~\eqref{taulambda}, implies the result.
\end{proof}

Next we need the following regularity result for $u$:
\begin{lemma}\label{lemma5}
The function $u$ is uniformly bounded in $\beta$, that is, 
\begin{equation}
||u||_{\infty} = O(1)\hskip20pt \hbox{as } \beta \to \infty,
\end{equation}
where $||\cdot||_{\infty}$ is the $L^\infty$ norm on $C[0,b]$. 
\end{lemma}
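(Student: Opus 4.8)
The plan is to reduce the eigenvalue equation to a scalar first order identity, use it to show that $u$ is unimodal, and then bootstrap a bound on $\max u$ from the normalization $u(1)=1$, the sign of $V'$ on the two branches of the well, and the estimate $\lambda=O(e^{-c\beta})$ of Lemma~\ref{lemma4}.

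First I would rewrite $Lu=-\lambda u$ in divergence form. In one dimension $Lu=\beta^{-1}u''-V'u'$, so the equation is equivalent to
\begin{equation*}
\left(e^{-\beta V}u'\right)' = -\beta\lambda\, e^{-\beta V}u \qquad \text{on } (0,b).
\end{equation*}
Since $u>0$ in $D$ and $\lambda>0$, the function $g:=e^{-\beta V}u'$ is strictly decreasing on $[0,b]$. As $u$ extends to a $C^1$ function on $[0,b]$ (elliptic regularity) with $u(0)=u(b)=0$ and $u\not\equiv0$, uniqueness for this regular second order ODE forces $u'(0)>0$ and $u'(b)<0$, hence $g(0)>0>g(b)$; so $g$ has a unique zero $x_m\in(0,b)$, with $u'>0$ on $(0,x_m)$ and $u'<0$ on $(x_m,b)$. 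Consequently $u$ increases strictly on $(0,x_m)$, decreases strictly on $(x_m,b)$, and $\|u\|_\infty=u(x_m)$.

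Next I would integrate the divergence form equation once from $x_m$ to get, for $x\in[0,b]$,
\begin{equation*}
u'(x)=-\beta\lambda\,e^{\beta V(x)}\int_{x_m}^x e^{-\beta V(s)}u(s)\,ds ,
\end{equation*}
and then integrate this a second time over the interval $J$ with endpoints $x_m$ and $1$, using $u(1)=1$. Absorbing the sign changes coming from the orientation of $J$ and of the inner limits, this produces an identity of the form
\begin{equation*}
u(x_m)=1+\beta\lambda\iint_{R} e^{\beta(V(t)-V(s))}\,u(s)\,ds\,dt ,
\end{equation*}
with nonnegative integrand, where $R$ is the triangle of pairs $(t,s)$ with $t$ between $x_m$ and $1$ and $s$ between $x_m$ and $t$. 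The key observation is that $J\subset[0,1]$ when $x_m\le1$ and $J\subset[1,b]$ when $x_m\ge1$; by (B1)--(B3), $V$ is monotone on $J$ (decreasing in the first case, increasing in the second), and $s$ lies between $x_m$ and $t$, so $V(t)-V(s)\le0$ and $e^{\beta(V(t)-V(s))}\le1$. Likewise $u$ is monotone on $J$ with maximum over $J$ attained at $x_m$, so $u(s)\le u(x_m)$. Since $|J|\le b$, this gives $u(x_m)\le 1+\beta\lambda\,(b^2/2)\,u(x_m)$. By Lemma~\ref{lemma4}, $\beta\lambda=O(\beta e^{-c\beta})\to0$, so for $\beta$ large $\beta\lambda b^2/2\le1/2$ and hence $u(x_m)\le2$; thus $\|u\|_\infty=O(1)$.

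The only non-elementary input is Lemma~\ref{lemma4}, which kills the prefactor $\beta\lambda$. The step requiring the most care is the second one: one must take the interval of integration to be exactly $J=[x_m\wedge1,\,x_m\vee1]$, which lies on a single monotone branch of both $V$ and of $u$, so that the exponential weight $e^{\beta(V(t)-V(s))}$ and the ratio $u(s)/u(x_m)$ are simultaneously bounded by $1$; integrating over, say, all of $(0,1)$ would fail, since $e^{\beta(V(t)-V(s))}$ could then be exponentially large in $\beta$.
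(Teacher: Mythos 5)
Your proof is correct, but it is a genuinely different argument from the one in the paper. The paper's proof is probabilistic: it uses It\^o's formula to obtain the Feynman--Kac representation $u(x)=\E\bigl[e^{\lambda\tau^x}u(X^x_{\tau^x})\bigr]$ on $(0,1)$, which together with $u(0)=0$, $u(1)=1$ gives $u(x)\le\E[e^{\lambda\tau^x}]$; it then dominates $\tau^x$ by the exit time of a reflected Brownian motion, bounds the tail of that exit time via an explicit heat-kernel estimate, and finally uses Lemma~\ref{lemma4} (through $\lambda\beta\to0$) to make $\E[e^{\lambda\tau^x}]\to1$. Your argument is purely analytic and uses the one-dimensional ODE structure more directly: writing the eigenvalue equation in the divergence form $\bigl(e^{-\beta V}u'\bigr)'=-\beta\lambda e^{-\beta V}u$, you deduce that $e^{-\beta V}u'$ is strictly decreasing, hence $u$ is unimodal with a unique interior maximum $x_m$; integrating twice from $x_m$ over the carefully chosen interval $J=[x_m\wedge1,\,x_m\vee1]$ (on which $V$ and $u$ are simultaneously monotone, so the kernel $e^{\beta(V(t)-V(s))}$ and the ratio $u(s)/u(x_m)$ are both $\le1$) yields the self-improving inequality $u(x_m)\le1+\tfrac12\beta\lambda b^2\,u(x_m)$, which Lemma~\ref{lemma4} closes. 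Both proofs rely on Lemma~\ref{lemma4} in the same essential role (to kill the factor $\beta\lambda$), and both exploit the sign of $V'$ on the two branches of the well, but yours avoids It\^o calculus, stochastic comparison, and heat-kernel bounds entirely. What the paper's version buys is a representation formula that could in principle carry over to higher dimensions; what your version buys is elementariness and a clean unimodality statement for the principal eigenfunction that handles both halves $[0,1]$ and $[1,b]$ uniformly, whereas the paper treats $(1,b]$ by a separate symmetric argument. One small point worth making explicit in a write-up: $u\in C^1[0,b]$ with $u'(0)>0$, $u'(b)<0$ should be justified by 1D ODE regularity plus uniqueness of the initial-value problem (if $u(0)=u'(0)=0$ then $u\equiv0$), which you gesture at correctly.
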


\begin{proof}
Define $f(t,x) = u(x)e^{\lambda t}$ and set 
\begin{equation*}
\tau^x = \inf\{t > 0\,:\, X_t^x \notin (0,1)\}
\end{equation*}
where $X_t^x$ obeys~\eqref{1} with $X_0^x = x$. 
Fix $T>0$. By It\={o}'s lemma, for $t \in [0,T\wedge\tau^x]$ we have
\begin{align*}
f(t,X_t^x) &= u(x) + \lambda \int_0^t u(X_s^x)e^{\lambda s}\,ds + \int_0^t Lu(X_s^x)e^{\lambda s}\,ds 
+ \sqrt{2\beta^{-1}}\int_0^t u'(X_s^x)\,dW_s \\
&= u(x) + \sqrt{2\beta^{-1}}\int_0^t u'(X_s^x)\,dW_s.
\end{align*}
Setting $t = T\wedge\tau^x$ and taking expectations gives 
\begin{equation}\label{FK}
u(x) = \E\left[f(T\wedge\tau^x,X_{T\wedge\tau^x}^x)\right] = {\E}\left[e^{\lambda T\wedge\tau^x}u(X_{T\wedge\tau^x}^x)\right].
\end{equation}
Recall that $u$ is bounded for fixed $\beta$. We show in 
equations~\eqref{bound1} below that 
$\E[e^{\lambda \tau^x}]$ is finite, so we may let $T \to \infty$ 
in~\eqref{FK} and use the dominated convergence theorem to obtain
\begin{equation}\label{ubound}
u(x) = {\E}\left[e^{\lambda \tau^x}u(X_{\tau^x}^x)\right] \le {\E}\left[e^{\lambda \tau^x}\right],
\end{equation}
where we have recalled $u(0)=0$ and, from (B4), $u(1) = 1$. 
The idea is then to compare $\tau^x$ to the first hitting time of $1$ 
of a Brownian motion reflected at zero. Define 
\begin{equation*}
\sigma^x = \inf\{t>0\,:\, B_t^x \notin (-1,1)\}
\end{equation*}
where 
\begin{equation*}
B_t^x = x + \sqrt{2\beta^{-1}}{W}_t
\end{equation*}
with ${W}_t^x$ as in~\eqref{1}. 
Let ${\bar B}_t^x$ and ${\bar X}_t^x$ be given by reflecting $B_t^x$ 
and $X_t^x$ at zero. Since $V' < 0$ on $(0,1)$, it is clear that 
${\bar X}_t^x \ge {\bar B}_t^x$ for each $x \in (0,1)$ and $t \ge 0$. 
Thus, 
\begin{align}\begin{split}\label{compare}
\PP\left(\tau^x \ge t\right)  
&\le \PP\left(\inf\{s>0\,:\, {\bar X}_s^x = 1\} \ge t\right)\\
&\le \PP\left(\inf\{s>0\,:\, {\bar B}_s^x = 1\} \ge t\right)\\
&\le \PP\left(\inf\{s>0\,:\, {\bar B}_s^0 = 1\} \ge t\right)\\
&= \PP(\sigma^0 \ge t).
\end{split}
\end{align}
We will bound from above the last line of~\eqref{compare}. Let
$v(t,x)$ solve the heat equation $v_t = \beta^{-1}v_{xx}$ with $v(0,x)=1$ 
for $x \in (-1,1)$ and $v(t,\pm 1) = 0$. An elementary analysis shows that
\begin{equation}\label{v1}
v(t,0) \le \frac{4}{\pi}\exp(-\beta^{-1} \pi^2 t/4).
\end{equation}
(The Fourier sine series for $v(t,x-1)$ on $[0,2]$ at $x=1$ is an alternating 
series, and its first term gives the upper bound above.)
We claim that for fixed $t$ and $x \in [0,t]$, 
\begin{equation}\label{v2}
 v(t,0) = \PP(\sigma^0 \ge t).
\end{equation}
To see this, let $w(s,x) = v(t-s,x)$ and observe that $w_s = - \beta^{-1}w_{xx}$, so by It\={o}'s lemma, for $s \in [0,t \wedge \sigma^x]$ 
\begin{align*}
 w(s,B_s^x) &= w(0,x) + \int_0^s \left(w_s + \beta^{-1}w_{xx}\right)(r,B_r^x)\,dr + \sqrt{2\beta^{-1}}\int_0^s w_x(r,B_x^r)\,dW_r\\
&=w(0,x)+ \sqrt{2\beta^{-1}}\int_0^s w_x(r,B_x^r)\,dW_r.
\end{align*}
By taking expectations and setting $s = t \wedge \sigma^x$ we obtain
\begin{align*}
v(t,x) =w(0,x) &= \E\left[w\left(t \wedge \sigma^x,B_{t \wedge \sigma^x}^x\right)\right] \\
&= \E\left[w\left(t,B_t^x\right)\,1_{\{t\le \sigma^x\}}\right] + \E\left[w\left(\sigma^x,B_{\sigma^x}^x\right)\,1_{\{t>\sigma^x\}}\right] \\
&= \E\left[v\left(0,B_{t}^x\right)\,1_{\{t\le \sigma^x\}}\right] \\
&= \PP(\sigma^x \ge t).
\end{align*}
From~\eqref{compare},~\eqref{v1} and~\eqref{v2}, for $x \in [0,1)$ 
\begin{equation*}
\PP(\tau^x \ge t) \le \frac{4}{\pi}\exp({-\beta^{-1} \pi^2 t/4}).
\end{equation*}
By Lemma~\ref{lemma4}, $\lambda \beta \to 0$ as $\beta \to \infty$. 
So for all sufficiently large $\beta$, 
\begin{align}\begin{split}\label{bound1}
\E\left[e^{\lambda \tau^x}\right] 
&= 1 + \int_1^\infty \PP(e^{\lambda \tau^x} \ge t)\,dt \\
&\le 1 + \frac{4}{\pi}\int_1^\infty  t^{-\pi^2/(4\lambda\beta)}\,dt\\
&= 1 + \frac{4}{\pi}\frac{4\lambda\beta}{\pi^2-4\lambda\beta}.
\end{split}
\end{align}
Now recalling~\eqref{ubound}, 
\begin{equation}\label{ubound2}
u(x) \le \E\left[e^{\lambda \tau^x}\right]\le 1 + \frac{4}{\pi}\frac{4\lambda \beta}{\pi^2 - 4 \lambda \beta}.
\end{equation}
Using Lemma~\ref{lemma4} we see that the right hand side of 
~\eqref{ubound2} approaches $1$ as $\beta \to \infty$. An 
analogous argument can be made for $x \in (1,b]$, showing that $u$ is 
uniformly bounded in $\beta$ as desired.
\end{proof}

Next we define a function which will be useful in the analysis of~\eqref{ratios}. 
For $x \in [0,1]$ let 
\begin{equation}\label{f}
 f(x) = \frac{\int_0^x e^{\beta V(t)}\,dt}{\int_0^1 e^{\beta V(t)}\,dt}.
\end{equation}
We compare $u$ and $f$ in the following lemma:

\begin{lemma}\label{lemma6}
Let $||\cdot||_{\infty}$ the $L^\infty$ norm on $C[0,1]$. 
With $f$ defined by~\eqref{f}, we have, in the limit $\beta \to \infty$,
\begin{align*}
&||f-u||_{\infty} = O\left(e^{-c\beta}\right),\\
&||f'-u'||_{\infty} = O\left(e^{-c\beta}\right).
\end{align*}
\end{lemma}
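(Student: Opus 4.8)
The plan is to exploit the fact that on $[0,1]$ the eigenvalue equation $Lu = -\lambda u$ is a one-dimensional linear ODE, so $u$ is nearly governed by the same ODE as $f$ but with the forcing term $-\lambda u$ on the right-hand side; since $\lambda$ is exponentially small (Lemma~\ref{lemma4}) and $u$ is bounded (Lemma~\ref{lemma5}), that forcing is negligible. Concretely, writing $L = -V' \partial_x + \beta^{-1}\partial_{xx}$, the equation $Lu = -\lambda u$ becomes $\beta^{-1} u'' - V' u' = -\lambda u$, which can be rewritten as
\begin{equation*}
\left(e^{-\beta V} u'\right)' = -\beta \lambda e^{-\beta V} u.
\end{equation*}
Meanwhile $f$ defined by~\eqref{f} satisfies $\left(e^{-\beta V} f'\right)' = 0$ with $f(0)=0$, $f(1)=1$, and $u$ satisfies $u(0)=0$, $u(1)=1$ by (B4). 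So $g := u - f$ satisfies $\left(e^{-\beta V} g'\right)' = -\beta\lambda e^{-\beta V} u$ with $g(0)=g(1)=0$.

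The key steps, in order: (1) Integrate the ODE for $g$ twice. From $\left(e^{-\beta V} g'\right)' = -\beta\lambda e^{-\beta V}u$ and the two boundary conditions $g(0)=g(1)=0$, solve explicitly: $e^{-\beta V(x)} g'(x) = e^{-\beta V(0)} g'(0) - \beta\lambda\int_0^x e^{-\beta V(t)} u(t)\,dt$, then integrate again and use $g(1)=0$ to pin down the constant $g'(0)$. This gives $g$ and $g'$ as explicit double integrals against the kernel $e^{\beta V(s) - \beta V(t)}$ with a prefactor $\beta\lambda$. (2) Bound the double integrals. Using $\|u\|_\infty = O(1)$ from Lemma~\ref{lemma5}, each double integral is bounded by $\left(\int_0^1 e^{\beta V}\right)\left(\int_0^1 e^{-\beta V}\right)$ up to constants; by Laplace asymptotics, $\int_0^1 e^{\beta V(t)}\,dt$ is of order $\beta^{-1/2} e^{\beta V(1)}$ (the max of $V$ on $[0,1]$ being at the saddle $x=1$) and $\int_0^1 e^{-\beta V(t)}\,dt$ is of order $\beta^{-1/2}$ (the min of $V$ being at $x=0$ with $V(0)=0$, though $V'(0)=0$ so the Laplace expansion at the characteristic endpoint must be handled with care — the leading order is still polynomial in $\beta$). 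Hence the double integral grows at most like $e^{\beta V(1)}$ times a polynomial in $\beta$. (3) Combine with $\lambda = O(e^{-c\beta})$ from Lemma~\ref{lemma4}. Here one needs $c > V(1)$, i.e. the exponential rate in Lemma~\ref{lemma4} must beat the growth $e^{\beta V(1)}$ coming from the kernel; this should follow because the large-deviation rate controlling $\lambda$ is $\inf_{z\in\partial D}(V(z) - V(1)) \ge $ (something strictly larger than what is needed), and one can shrink $c$ to $V(1) - \varepsilon$ or rather argue that $\lambda \le C e^{-\beta(V(b) - V(1))}$ or similar — the precise bookkeeping of which barrier appears is the delicate point. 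Taking products, $\beta\lambda \cdot (\text{double integral}) = O(e^{-c'\beta})$ for some $c' > 0$, uniformly in $x\in[0,1]$, which gives both $\|g\|_\infty = O(e^{-c\beta})$ and $\|g'\|_\infty = O(e^{-c\beta})$ after also checking the $g'(0)$ constant term is exponentially small.

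The main obstacle I expect is step (3): making sure the exponential decay of $\lambda$ genuinely dominates the exponential \emph{growth} $e^{\beta V(1)}$ of the kernel $\int_0^1\!\int_0^1 e^{\beta(V(s)-V(t))}$. One must check that the rate $c$ in Lemma~\ref{lemma4}, which comes from the quasipotential $\inf_{z\in\partial D}(V(z)-V(1))$ for the full basin $D=(0,b)$, exceeds $V(1)=V(1)-V(0)$, the barrier between the minimum $0$ and the saddle $1$; equivalently one needs $\min(V(0), V(b)) > 2V(1) - \ldots$, which is not automatic and may require either sharpening Lemma~\ref{lemma4} to track the correct barrier, or localizing the kernel estimate (the double integral is really dominated by $s$ near $1$ and $t$ near $0$, but the product $e^{\beta V(s)-\beta V(t)}$ is largest there and that's exactly $e^{\beta V(1)}$). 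A cleaner route, if the crude bound fails, is to not bound $\|u\|_\infty$ by a constant but to use $u(x) \le \E[e^{\lambda\tau^x}] \to 1$ together with a sharper localization, or to iterate: first prove $u = f + O(e^{-c\beta})$ in a weaker norm and bootstrap. I would also double-check the Laplace expansion at $x=0$, where $V'(0)=0$ forces a degenerate (Gaussian-type) expansion of $\int_0^1 e^{-\beta V}$ rather than the usual endpoint expansion — a minor but necessary adjustment.
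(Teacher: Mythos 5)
Your reduction is the same as the paper's — differentiate $g = u - f$, multiply by $\beta e^{-\beta V}$ to get $\bigl(e^{-\beta V}g'\bigr)' = -\beta\lambda e^{-\beta V}u$, integrate twice, and let $\lambda = O(e^{-c\beta})$ do the work. But there is a genuine gap in step (2), and it stems from having the geometry of $V$ on $[0,1]$ upside down. Under (B1)--(B3), $x=0$ is the \emph{saddle} on $\partial D$ with $V(0)=0$, $V''(0)<0$, and $x=1$ is the \emph{local minimum} with $V(1)<0$, $V''(1)>0$; hence $V$ is strictly \emph{decreasing} on $[0,1]$. So $\int_0^1 e^{\beta V}\,dt = O(\beta^{-1/2})$ (max at $0$, $V(0)=0$), and $\int_0^1 e^{-\beta V}\,dt = O\bigl(\beta^{-1/2}e^{-\beta V(1)}\bigr) = O\bigl(\beta^{-1/2}e^{\beta|V(1)|}\bigr)$ — exactly the reverse of what you wrote. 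With the correct asymptotics, the crude product bound $\bigl(\int e^{\beta V}\bigr)\bigl(\int e^{-\beta V}\bigr) \sim \beta^{-1}e^{\beta|V(1)|}$ still grows exponentially, and $\lambda$ need not decay fast enough to beat it: Lemma~\ref{lemma4} only gives $\lambda = O(e^{-c\beta})$ for some $c>0$ coming from the quasipotential, and there is no reason this $c$ should exceed $|V(1)|$. So step (3), as you correctly suspected, does not close under the crude bound.

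The resolution, which the paper uses and which you are missing, is \emph{not} to sharpen Lemma~\ref{lemma4} or bootstrap, but to avoid the product bound altogether by exploiting the nested integration range together with the monotonicity of $V$. In both the $C_\beta$ and $g'$ integrals the kernel is $e^{\beta(V(t)-V(s))}$ with $s\le t\le 1$; since $V$ is decreasing on $[0,1]$ one has $V(t)\le V(s)$ there, so the kernel is bounded by $1$ pointwise. Consequently the double integral in $|C_\beta|$ is bounded by $\|u\|_\infty$ (a constant, by Lemma~\ref{lemma5}), the prefactor $\bigl(\int_0^1 e^{\beta V}\bigr)^{-1}$ contributes only a polynomial $O(\beta^{1/2})$, and $e^{\beta V(x)}\le 1$ on $[0,1]$ since $V\le 0$ there. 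Everything that remains is polynomial in $\beta$ times $\lambda$, so $\lambda = O(e^{-c\beta})$ suffices with no constraint relating $c$ to $|V(1)|$. Once you correct the picture of $V$ and make this kernel observation, your step (1) and the rest of the argument go through and match the paper's proof.
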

\begin{proof}
Observe that $g = f-u$, defined on $[0,1]$, satisfies 
\begin{align}\begin{split}\label{g}
&-V'(x)g'(x) + \beta^{-1}g''(x) = \lambda u(x) \\
&\hskip74pt g(0)=0, \hskip5pt g(1) = 0.
\end{split}
\end{align}
Multiplying by $\beta e^{-\beta V(x)}$ in~\eqref{g} 
leads to 
\begin{equation*}
\frac{d}{dx}\left(e^{-\beta V(x)}g'(x)\right) = \beta e^{-\beta V(x)} \lambda u(x)
\end{equation*}
so that 
\begin{equation}\label{gprime}
 g'(x) = e^{\beta V(x)}\left(\lambda \beta \int_0^x e^{-\beta V(t)}u(t)\,dt + C_\beta\right).
\end{equation}
Integrating~\eqref{gprime} and using $g(0)=0$, 
\begin{equation}\label{here}
g(x) = \lambda \beta \int_0^x \left(e^{\beta V(t)}\int_0^t  e^{-\beta V(s)}u(s)\,ds\right)\,dt + C_\beta \int_0^x e^{\beta V(t)}\,dt.
\end{equation}
Using Lemma~\ref{lemma5} we have $||u||_{\infty}\le K<\infty$.  
From (B1) and (B3) 
we see that $V$ is decreasing on $[0,1]$. 
So putting $g(1)=0$ in~\eqref{here} we obtain, 
for all sufficiently large $\beta$, 
\begin{align}\begin{split}
\label{use}
|C_\beta| &= \lambda \beta \left(\int_0^1 e^{\beta V(t)}\,dt\right)^{-1} \int_0^1 \left(e^{\beta V(t)}\int_0^t e^{-\beta V(s)}u(s)\,ds\right)\,dt \\
&\le \lambda \beta \left(\int_0^1 e^{\beta V(t)}\,dt\right)^{-1}\int_0^1 \left(\int_0^t u(s)\,ds\right)\,dt \\
 &\le \lambda \beta K \left(\int_0^1 e^{\beta V(t)}\,dt\right)^{-1} \\
 &\le 2\lambda \beta^{3/2} K \left(\frac{-2 V''(0)}{\pi}\right)^{1/2}
 \end{split}
\end{align}  
where in the last line Laplace's method is used. 
Using Lemma~\ref{lemma4}, for all sufficiently large $\beta$, 
\begin{equation}\label{bound}
|C_\beta| \le e^{-c\beta}.
\end{equation}
From (B1) and (B3) we see that $V$ is nonpositive on $[0,1]$, so 
from~\eqref{gprime}, 
\begin{align}\begin{split}\label{gprimebound}
|g'(x)|&\le \lambda \beta \int_0^x e^{\beta (V(x)-V(t))}u(t)\,dt + |C_\beta| e^{\beta V(x)} \\
&\le \lambda \beta K  + e^{-c\beta}.
\end{split}
\end{align}
Using Lemma~\ref{lemma4} again,
we get $||g'||_{\infty} = O(e^{-c\beta})$. 
As $g(0) = 0$ this implies $||g||_{\infty} = O(e^{-c\beta})$. 
This completes the proof.
\end{proof}
\begin{remark}\label{remark2}
A result analogous to Lemma~\ref{lemma5} holds, with 
\begin{equation*}
 f(x) = \frac{\int_x^b e^{\beta V(t)}\,dt}{\int_1^b e^{\beta V(t)}\,dt},
\end{equation*}
for $x \in [1,b]$. 
\end{remark}

We are now in position to prove Theorem~\ref{theorem2}.
\begin{proof}[Proof of Theorem~\ref{theorem2}]
It suffices to prove the case $i=1$, so we will look at the 
endpoint $\partial D_1 = \{0\}$. From Theorem~\ref{theorem0b} we have 
\begin{equation*}
\rho(\{0\}) = \frac{\frac{d}{dx}\left(u(x) e^{-\beta V(x)}\right)\big|_{x=0}}{\beta \lambda\int_D u(x)e^{-\beta V(x)}\,dx} 
\end{equation*}
so that 
\begin{equation*}
\lambda p_1 =   \frac{e^{-\beta V(0)} u'(0)}
{\beta\int_D u(x)e^{-\beta V(x)}\,dx}.
\end{equation*}
Introducing again the superscripts $^{hi}$ and $^{lo}$, 
\begin{equation}\label{C}
\frac{\lambda^{hi}p_1^{hi}}{\lambda^{lo}p_1^{lo}} = e^{-(\beta^{hi}-\beta^{lo})V(0)}\cdot \frac{\beta^{lo}}{\beta^{hi}}\cdot\frac{u^{hi\,'}(0)}
{u^{lo\,'}(0)}
\cdot\frac{\int_D u^{lo}(x)e^{-\beta^{lo} V(x)}\,dx}{\int_D u^{hi}(x)e^{-\beta^{hi} V(x)}\,dx} 
\end{equation}
Dropping the superscripts, recalling the function $f$ from~\eqref{f}, 
and using Lemma~\ref{lemma6}, we see that 
\begin{equation*}
u'(0) = f'(0) + O\left(e^{-c\beta}\right).
\end{equation*}
Since
\begin{equation*}
 f'(0) = \left(\int_0^1 e^{\beta V(t)}\,dt\right)^{-1} =\left(1+k_1\beta^{-1}+O\left(\beta^{-2}\right)\right)\sqrt{\beta}\left(\frac{-2V''(0)}{\pi}\right)^{1/2}
\end{equation*}
where $k_1$ is a $\beta$-independent constant coming from the 
second term in the Laplace expansion. Thus 
\begin{align}\begin{split}\label{uprime}
u'(0) &= O\left(e^{-c\beta}\right)+\left(1+k_1\beta^{-1}+O\left(\beta^{-2}\right)\right)\sqrt{\beta}\left(\frac{-2V''(0)}{\pi}\right)^{1/2}\\
&= \left(1+k_1\beta^{-1}+O\left(\beta^{-2}\right)\right)\sqrt{\beta}\left(\frac{-2V''(0)}{\pi}\right)^{1/2}.
\end{split}
\end{align}
This takes care of the third term of the product in~\eqref{C}. 
We now turn to the fourth term. Let $y \in (0,1]$ and note that for $t \in (y,1]$, 
\begin{equation*}
f'(t) = e^{\beta V(t)}\left(\int_0^1 e^{\beta V(x)}\,dx\right)^{-1}  = O\left(e^{-c\beta}\right)
\end{equation*}
where here $c$ depends on $y$. Since $f(1)=1$, for all sufficiently large $\beta$, 
\begin{equation}\label{leveling}
|f(t)- 1|\le e^{-c\beta}
\end{equation}
for $t \in [y,1]$ and a different $c$. 
Also, 
\begin{equation*}
\int_y^1 e^{-\beta V(x)}\,dx = \left(1+ k_2\beta^{-1}+O\left(\beta^{-2}\right)\right)\sqrt{\beta^{-1}}\left(\frac{\pi}{2 V''(1)}\right)^{1/2}\,e^{-\beta V(1)}.
\end{equation*}
where $k_2$ is a $\beta$-independent constant coming from 
the second term in the Laplace expansion. Thus 
\begin{align}\begin{split}\label{part4}
\int_0^1 f(x)e^{-\beta V(x)}\,dx &= O\left(e^{-\beta V(y)}\right)+ \int_y^1 f(x)e^{-\beta V(x)}\,dx \\
&= O\left(e^{-\beta V(y)}\right) + \left(1+ O\left(e^{-c\beta}\right)\right)\int_y^1 e^{-\beta V(x)}\,dx \\
&= O\left(e^{-\beta V(y)}\right) + \left(1+ k_2\beta^{-1}+O\left(\beta^{-2}\right)\right)\sqrt{\beta^{-1}}\left(\frac{\pi}{2 V''(1)}\right)^{1/2}\,e^{-\beta V(1)}\\
&= \left(1+ k_2\beta^{-1}+O\left(\beta^{-2}\right)\right)\sqrt{\beta^{-1}}\left(\frac{\pi}{2 V''(1)}\right)^{1/2}\,e^{-\beta V(1)}.
\end{split}
\end{align}
Using~\eqref{part4} and Lemma~\ref{lemma6} again,
\begin{equation*}
\int_0^1 u(x)e^{-\beta V(x)}\,dx =  \left(1 +  k_2\beta^{-1}+ O\left(\beta^{-2}\right)\right) \sqrt{\beta^{-1}}\left(\frac{\pi}{2 V''(1)}\right)^{1/2}\, e^{-\beta V(1)}.
\end{equation*}
From Remark~\ref{remark2}, we can make an identical argument on $[1,b)$ to get
\begin{equation}\label{integral}
\int_D u(x)e^{-\beta V(x)}\,dx =  \left(1 +  k_2\beta^{-1}+O\left(\beta^{-2}\right)\right) \sqrt{\beta^{-1}}\left(\frac{2\pi}{V''(1)}\right)^{1/2}\, e^{-\beta V(1)}.
\end{equation}
with a different but still $\beta$-independent $k_2$. 
This takes care of the fourth term in the product in~\eqref{C}. 
Observe that in the limit $\beta^{hi},\beta^{lo} \to \infty$, $\beta^{lo}/\beta^{hi} = r$ 
we have:
\begin{align*}
&\frac{1+k_1(\beta^{hi})^{-1} + O((\beta^{hi})^{-2})}{1+k_1(\beta^{lo})^{-1}+O((\beta^{lo})^{-2})} = 1 + O\left(\frac{1}{\beta^{hi}}- \frac{1}{\beta^{lo}}\right)\\
&\frac{1+k_2(\beta^{lo})^{-1} + O((\beta^{lo})^{-2})}{1+k_2(\beta^{hi})^{-1}+O((\beta^{hi})^{-2})} = 1 + O\left(\frac{1}{\beta^{hi}}- \frac{1}{\beta^{lo}}\right).
\end{align*}
Reintroducing 
the superscripts $^{hi}$ and $^{lo}$ and using~\eqref{uprime} and~\eqref{integral} 
in~\eqref{C} now gives 
\begin{equation}\label{main2}
\frac{\lambda^{hi}p_1^{hi}}{\lambda^{lo}p_1^{lo}} = \left(1 + O\left(\frac{1}{\beta^{hi}}- \frac{1}{\beta^{lo}}\right)\right) e^{-(\beta^{hi}-\beta^{lo})(V(0)-V(1))}
\end{equation}
as desired.
\end{proof}

\section{Conclusion}

We have presented a mathematical framework for TAD which is valid in 
any dimension, along with a complete analysis of TAD in one dimension 
under this framework. This framework uses the notion of
quasi-stationary distribution, and is useful in particular to clarify
the immediate equilibration assumption (or no-recrossing assumption) which is underlying the original
TAD algorithm and to understand the extrapolation rule using the
Arrhenius law.
We hope to extend this justification of the extrapolation rule to high 
dimensions, using techniques from~\cite{Tony2}; the analysis 
seems likely to be technically detailed.

We hope that our 
framework for TAD will be useful in cases where the original 
method is not valid. Indeed, we have shown that TAD can be 
implemented wherever accurate estimates for the 
ratios in~\eqref{ratios} are available. This fact is important for transitions which 
pass through degenerate saddle points, in which case a pre-exponential 
factor is needed on the right hand side of~\eqref{main}. 
For example, in one dimension, 
a simple modification of our analysis shows that if we consider 
degenerate critical points on $\partial D$, then a 
factor of the form $(\beta^{hi}/\beta^{lo})^{\alpha}$ 
must be multiplied with the right hand side of~\eqref{main}.

\section*{Acknowledgments}

{\sc D. Aristoff} gratefully acknowledges enlightening discussions with {\sc G. Simpson} 
and {\sc O. Zeitouni}. {\sc D. Aristoff} and {\sc T. Leli\`evre} acknowledge 
fruitful input from {\sc D. Perez} and {\sc A.F. Voter}. Part of this work was completed while 
{\sc T. Leli\`evre} was an Ordway visiting professor at the University of Minnesota.
The work of {\sc D. Aristoff} was supported in part by DOE Award DE-SC0002085.


\begin{thebibliography}{99}
\bibitem{TAD9} {\sc X.M. Bai, A. El-Azab, J. Yu, and T.R. Allen}, 
{\em Migration mechanisms of oxygen interstitial clusters in UO2}, 
J. Phys. Condens. Matter., 25(1) (2013), pp.~015003.


\bibitem{Berglund} {\sc N. Berglund},
{\em Kramers’ law: Validity, derivations and generalisations}, 
Markov Processes Relat. Fields, 19  (2013), pp.~459-490. 

\bibitem{TAD11} {\sc A.V. Bochenkova and V.E. Bochenkov}, 
{\em HArF in Solid Argon Revisited: Transition from Unstable to Stable Configuration}, 
J. Phys. Chem. A, 113(26) (2009), pp.~7654--7659.

\bibitem{Bovier} {\sc A. Bovier, M. Eckhoff, V. Gayrard, and M. Klein},
{\em Metastability in reversible diffusion dynamics I. Sharp asymptotics for capacities and exit times}, {J. Eur. Math.
Soc.}, {6(4)} (2004), pp.~399--424.

\bibitem{TAD6} {\sc M. Cogoni, B.P. Uberuaga, A.F. Voter, and L. Colombo}, 
{\em Diffusion of small self-interstitial clusters in
silicon: temperature-accelerated tight-binding molecular dynamics simulations}, Phys. Rev. B, 71 (2005), 
pp.~121203-1--4.

\bibitem{Day2} {\sc M.V. Day}, {\em On the exponential exit law in the small parameter exit problem}, 
Stochastics, {8} (1983), pp.~297--323. 

\bibitem{Day} {\sc M.V. Day}, {\em Mathematical Approaches to the Problem of Noise-Induced Exit}, 
{Systems and Control: Foundations and Applications}, Springer, 1999, pp.~267--287.

\bibitem{Ofer} {\sc A. Dembo and O. Zeitouni}, {\em Large deviations techniques and applications}, 
Second ed., Springer Verlag, New York, 1998.


\bibitem{Devinatz}  {\sc A. Devinatz and A. Friedman}, 
{\em Asymptotic Behavior of the Principal Eigenfunction for a Singularly 
Perturbed Dirichlet Problem}, {Indiana Univ. Math. J.}, {27} (1978), pp.~143--157.

\bibitem{Friedman}  {\sc A. Friedman} {\em Stochastic differential equations and applications}, 
Vol.~2, Academic Press, New York, 1976.

\bibitem{Hanggi} {\sc P. H\"anggi, P. Talkner, and M. Borkovec}, {\em Reaction-rate theory: fifty years after Kramers}, 
{Rev. Mod. Phys.}, {62(2)} (1990), pp.~251--342.


\bibitem{TAD5} {\sc D.J. Harris, M.Y. Lavrentiev, J.H. Harding, N.L. Allan, and J.A. Purton}, 
{\em Novel exchange mechanisms in the surface diffusion of oxides}, {J. Phys. Condens. Matter}, {16} (2004), 
pp.~187--92.

\bibitem{Helffer} {\sc B. Helffer and F. Nier}, 
{\em Quantitative analysis of metastability in reversible diffusion processes via a Witten complex approach : the case with boundary}, {\em M\' emoires de la SMF}, {105} (2006).


\bibitem{Henkelman} {\sc G. Henkelman, B.P. Uberuaga, and H. J\'onsson},  
{\em A climbing image nudged elastic band method for finding saddle points and minimum energy paths}, 
{J. Chem. Phys.}, {113} (2000), pp.~9901--9904.



\bibitem{Tony} {\sc C. Le Bris, T. Leli\`evre, M. Luskin, and
   D. Perez}, {\em A mathematical formulation  of the parallel replica dynamics}, 
{Monte Carlo Methods Appl.}, {18} (2012), pp.~119--146.

\bibitem{Tony2} {\sc T. Leli\`evre and F. Nier}, {\em Low temperature asymptotics for 
Quasi-Stationary Distribution in a bounded domain}, arXiv:1309.3898.

\bibitem{Menz} {\sc G. Menz and A. Schlichting}, {\em Poincar\'e and logarithmic Sobolev inequalities by decomposition of the energy landscape}, arXiv:1202.1510.

\bibitem{TAD3} {\sc F. Montalenti, M.R. S\o rensen, and A.F. Voter}, 
{\em Closing the gap between experiment and theory: crystal
growth by temperature accelerated dynamics}, {Phys. Rev. Lett.},  87 (2001), 
pp.~126101-1--4.



\bibitem{Voter3} {\sc F. Montalenti and A.F. Voter}  
{\em Exploiting past visits or minimum-barrier knowledge to gain further boost in
the temperature-accelerated dynamics method}, {J. Chem. Phys.}, {116} (2002), 
pp.~4819--4828.

\bibitem{Montalenti2} {\sc F. Montalenti, A.F. Voter, and R. Ferrando},  
{\em Spontaneous atomic shuffle in flat terraces: Ag(100)}, {Phys. Rev. B}, {66} (2002), 
pp.~205404-1--7.

\bibitem{Mousseau} {\sc N. Mousseau, L.K. B\'eland, P. Brommer,  
J.F. Joly, F. El-Mellouhi, E. Machado-Charry, M.C. Marinica, and P. Pochet},  
{\em The Activation-Relaxation Technique: ART Nouveau and Kinetic ART}, 
{Journal of Atomic, Molecular, and Optical Physics}, (2012), pp.~925278.

\bibitem{Nier} {\sc F. Nier}, {\em Boundary conditions and subelliptic estimates for geometric Kramers-Fokker-Planck operators on manifolds with boundaries}, arXiv 1309.5070.

\bibitem{Oksendal} {\sc B. \O ksendal}, {\em Stochastic differential equations: an 
introduction with applications}, Springer Verlag, New York, 2003.

\bibitem{TAD10} {\sc C.W. Pao, T.H. Liu, C.C. Chang, and D.J. Srolovitz}, 
{\em Graphene defect polarity dynamics}, {Carbon}, 50(8) (2012), pp.~2870--2876.

\bibitem{Voter2} {\sc D. Perez, B.P. Uberuaga, Y. Shim, J.G. Amar, 
and A.F. Voter}, {\em Accelerated molecular dynamics methods: introduction and recent developments,} 
{Annual Reports in Computational Chemistry}, {5} (2009), pp.~79--98.

\bibitem{Gideon} {\sc G. Simpson and M. Luskin}, {\em Numerical
    Analysis of Parallel Replica Dynamics}, ESAIM: Mathematical
  Modelling and Numerical Analysis, 47(5) (2013), pp.~1287--1314.

\bibitem{Voter} {\sc M.R. S\o rensen and A.F. Voter}, 
{\em Temperature-accelerated dynamics for 
simulation of infrequent events}, {J. Chem. Phys.}, {112} (2000),
pp.~9599--9606.

\bibitem{TAD4} {\sc J.A. Sprague, F. Montalenti, B.P. Uberuaga, J.D. Kress, and A.F. Voter}, 
{\em Simulation of growth of Cu on Ag(001) at experimental deposition rates}, 
{Phys. Rev. B}, {66} (2000), pp.~205415-1--10.

\bibitem{TAD8} {\sc D.G. Tsalikis, N. Lempesis, G.C. Boulougouris, and 
D.N. Theodorou}, {\em Temperature accelerated dynamics in glass 
forming materials}, {\em J. Chem. Phys. B}, 114 (2010), pp.~7844--7853.


\bibitem{TAD1} {\sc B.P. Uberuaga, R. Smith, A.R. Cleave, G. Henkelman, R.W. Grimes, 
A.F. Voter, and K.E. Sickafus}, 
{\em Dynamical simulations of radiation damage and defect mobility in MgO}, 
{Phys. Rev. B}, {71} (2005), pp.~104102-1.

\bibitem{TAD2} {\sc B.P. Uberuaga, R. Smith, A.R. Cleave, F. Montalenti, G. Henkelman, 
R.W. Grimes, A.F. Voter, and K.E. Sickafus}, 
{\em Structure and mobility of defects formed from collision cascades in MgO}, 
{Phys. Rev. Lett.}, {92} (2004), pp.~115505-4.

\bibitem{TAD7} {\sc B.P. Uberuaga, S.M. Valone, M.I. Baskes, and A.F. Voter},  
{\em Accelerated molecular dynamics study of vacancies in Pu}, {AIP Conf. Proc.}, 673 (2003), 
pp.~213--5.

\bibitem{KMC} {\sc A.F. Voter}, {\em Introduction to the kinetic Monte Carlo method, 
In Radiation Effects in Solids}, Springer, NATO Publishing Unit, Dordrecht, Netherlands, 2006, pp.~1--24.

\bibitem{VoterParRep} {\sc A.F. Voter}, (1998) {\em Parallel replica method for dynamics of infrequent events}, 
{Phys. Rev. B}, {57} (1998), R13985--88.

\bibitem{Voterpriv} {\sc A.F. Voter}, Private communication.
\end{thebibliography}
\end{document}